\journal{Journal of \LaTeX\ Templates}
\newtheorem{theorem}{Theorem}[section]
\newtheorem{lemma}{Lemma}[section]
\newtheorem{definition}{Definition}[section]
\newtheorem{corollary}{Corollary}[section]
\newtheorem{proposition}{Proposition}[section]
\newcommand{\bI}{{\bf I}}
\newcommand{\bone}{{\bf 1}}
\newcommand{\bX}{{\bf X}}
\newcommand{\bA}{{\bf A}}
\newcommand{\bx}{{\bf x}}
\newcommand{\bxi}{{\bm \xi}}
\newcommand{\bS}{{\bm \Sigma}}
\newcommand{\btea}{{\bm \theta}}
\newcommand{\bmu}{{\bm \mu}}
\newcommand{\mb}{\mbox}
\newcommand{\md}{\mbox{d}}
\newcommand{\mS}{\bf S}
\begin{document}

\begin{frontmatter}

\title{Tests for  large dimensional covariance  structure based on Rao's  score  test
\corref{mycorrespondingauthor}}
\cortext[mycorrespondingauthor]{Corresponding author}

\author{Dandan Jiang\fnref{myfootnote}}
\address{School of Mathematics, \\
Jilin University, \\
2699  QianJin Street, \\
Changchun {\rm 130012}, China.}
\fntext[myfootnote]{Supported by Project 11471140 from NSFC.}
\ead{jiangdandan@jlu.edu.cn}

\begin{abstract}
This paper proposes  a  new test for covariance matrices  structure based on the correction to Rao's score test in large dimensional framework. By generalizing the  CLT for the linear spectral statistics of large dimensional sample covariance matrices,   the test can be  applicable for large dimensional  non-Gaussian variables in a wider range without  the restriction of the 4th moment. Moreover, the amending Rao's score test  is also powerful  even for the ultra high dimensionality as $p \gg n$, which  breaks the inherent idea that  the corrected tests by RMT can be 
only   used when $p<n$.  Finally, we compare the proposed test with other high dimensional  covariance structure  tests to evaluate their performances through the simulation study.
\end{abstract}

\begin{keyword}
Large dimensional data \sep Covariance structure \sep Rao's score test \sep Random matrix theory
\MSC[2010] 62H15\sep  62H10
\end{keyword}

\end{frontmatter}

\linenumbers

\section{Introduction}

Recent advances in data acquisition techniques and the ease of access to high computation power  have  fueled increased interest in analyzing  the data with moderate even large dimensional  variables in most sciences,  such as  microarray gene  expressions in biology, where the number of feature variables $p$  greatly  exceeds  the sample size $n$.  However, the traditional statistical methods encounter failure due to the increase in dimensionality, because they are established on the basis of fixed dimension $p$ as the sample size $n$ tends to infinity.  So 
 many  efforts have been made to improve the power of the classical statistical methods and to propose new procedures designed for  the large dimensional data. 
 A particular attention has been paid to the covariance matrices structure test, which is of  fundamental statistical interest and   widely used in the  biology, finance and  etc.  Let $\chi=(\bx_1,\cdots,\bx_n)$ is an independent and identically distributed  sample from 
a $p$ dimensional random vector $\bX$ with mean $\bmu$ and covariance matrix $\bS$. To test on  the structure of  covariance  matrices, we consider the hypothesis  
\begin{equation}
H_0 : \bS=\bS_0  \quad \mb{v.s.} \quad  H_1: \bS \neq \bS_0, \label{H1}
\end{equation}
which covers  the identity hypothesis test  $H_0 : \bS=\bI_p $ and  the sphericity  hypothesis test  $H_0 : \bS=\gamma\bI_p $ as  the special cases. Within this context,  it has been  well studied under the normal distribution assumption with the classical setting of fixed $p$, such as  \cite{A03}, \cite{John} and \cite{Nagao}.  Also, the Rao's score test was given in  \cite{Rao}.  But they all  lost their effectiveness  as $p$ is  a moderate or ultra high  dimension, even worse for the non-Gaussian variables. Therefore, many statisticians have  investigated this problem and provided  the  various solutions  for  the large dimensional data setting. The earlier  works include \cite{Johnstone},  \cite{Ledoitwolf} and  \cite{Srivastava}, which involved some well-chosen distance function  relied on the first  and second spectral moments as dimension $p$ and sample size $n$ go to infinity together, whereas  they were invalid for either the ultra high dimensionality or non-Gaussian variables. Then Bai  {\em et al}. \cite{B09} focused   on   deriving the limiting behavior of the corrected  LRT under the large dimensional limiting scheme $p/n \rightarrow c \in [0,1)$,  and Jiang  {\em et al}. \cite{J12} extended it to a wider spread with $c \in [0,1]$ and $p<n$. Their methods expanded the application range  without  distribution assumption, but still not applicable for the case of  $ p>n$ where the likelihood ratio cannot be well defined. Recently, Chen {\em et al}. \cite{Chen} proposed  a  nonparametric test  with the constrains of  uniformly  bounded  8th moment  and derived its asymptotic distribution under the null hypothesis  regardless of the limiting  behavior of $p/n$.  Motivated by  this,    Cai and Ma \cite{Cai}  investigated the high dimensional covariance testing problem from a minimax point of view under the normal assumption. It showed that its power uniformly dominated that of the corrected LRTs over the entire asymptotic  regime  in which  the corrected LRTs  were defined. Though it had the optimal power,  as seen from our simulation, It  failed in empirical sizes  when the dimension $p$ was much higher than the sample size $n$, especially  the case of "large $p$ small $n$".

In this paper, we proposed a  new test for  the hypothesis (\ref{H1})  by RMT (random matrix theory) based on the aforementioned  Rao's score test. The main contributions of this work displayed in several aspects. First, we generalized  the   CLT(central limit theorem) for   the  LSS (linear spectral statistic) of large dimensional sample covariance matrices in  \cite{BS04}. By removing the restriction that the 4th moment  of the variable is $3+\delta$, where $\delta $ is a positive constant tending to 0,  we provided  an enhanced  version  of the theorem, which made the test proposed in this work  suitable for non-Gaussian variables in a wider range.  Moreover,  our correction based on Rao's score test can be applied to the ultra high dimensionality  in despite of  the functional relationship between $p$ and $n$. Although it was derived  under the limiting scheme $p/(n-1)\rightarrow q \in [0, +\infty) $  with unknown mean parameter $\bmu$, exactly what we need was just the ratio of $p$ over $n$  in practical problems, which is always easily acquired under any  functional expression  of $p$ and $n$. It can be sustained by the simulation when $(p,n)=(40,19)$ or  $(320,79)$ and etc., which are close to the pair numbers adopted 
in  \cite{Chen}  by the function $p=\mb{exp}(n^{0.4})+10$.  It also revealed that  whether the corrections by RMT can be used  in the case of $p>n$ depends on the corrected statistics we chose  rather than the tools we used in RMT.    Finally,  the restricted condition is relaxed  to the finite 4th moment  compared  with \cite{Chen}, and our correction to Rao's score  test has the more accurate sizes  and  better  powers as shown in the simulation study.

The remainder  of the article  is organized as follows. Section \ref{Pre}  gives   a quick review of the  Rao's  Score  test,  then  details  their testing statistics for  covariance  structure   tests.  An enhanced version of the  large dimensional CLT in \cite{BS04} is also provided in this part.  In Section \ref{New}, we propose  the  new  testing statistics in large dimensional setting  based on the Rao's score test.  Simulation results are presented  to evaluate the performance   of  our test compared with other large dimensional covariance matrices tests in Section \ref{Sim}. Then we draw a conclusion in the  Section \ref{Con}, and the proofs and derivations are listed in the Appendix \ref{app}


\section{Preliminary} \label{Pre}

We first  give a quick review of the Rao's  Score  test, and derive their classical test statistic for the  hypothesis  (\ref{H1}).  Then the test statistic is refined into something precisely needed in  the amendment process.  An enhanced version of the CLT for   LSS of large dimensional  sample covariance matrices  is also presented, which makes it possible that the modifications of  the  score tests  have a wider use with the 4th moment requirement  excluded. 
 
\subsection{Rao's Score Test} 

Let $\bX$ be  a random variable with population distribution $F_{\bX}(x,\btea)$ and density function 
$f_{\bX}(x,\btea)$, where  $\btea$ is an unknown parameter.  The score vector of $\bX$ is defined as  $U(\bX,\btea)=\displaystyle\frac{\md}{\md \btea}\mb{ln} f_{\bX}(x,\btea)$. Then the  information matrix of $\bX$ is 
\[ I (\bX,\btea)=\mb E(U(\bX,\btea)U'(\bX,\btea))\]
It is well known that the information matrix  can be also calculated by Hessian matrix $H(\bX,\btea)$ as below:
\[I (\bX,\btea)=- \mb E (H(\bX,\btea))=- \mb E (\frac{\md^2}{\md \btea^2}\mb{ln} f_{\bX}(x,\btea))\]

Let $\chi=(\bx_1,\cdots,\bx_n)$  denote a sample from the population distribution  $F_{\bX}(x,\btea)$. 
Then the log-likelihood, the score function and the information matrix of the sample are given by 
$l(\chi,\btea)=\sum\limits_{i=1}^{n}\mb{ln} f(\bx_i,\btea)$, $U(\chi,\btea)=\sum\limits_{i=1}^{n}U(\bx_i,\btea)$ and $I (\chi,\btea)=nI (\bx_1,\btea)$, respectively. 
Then we have  the definition of  Rao's score test statistic as below:
\begin{definition}
 Rao's score test statistic    for the hypothesis $H_0 : \btea=\btea_0  $ is defined by
\[\mb{RST}(\chi, \btea_0)=U'(\chi,\btea_0)I(\chi,\btea_0)^{-1}U(\chi,\btea_0),\]
 where $\btea_0=(\theta_{01},\cdots,\theta_{0p})'$ is a known vector and 
 $ \mb{RST}(\chi, \btea_0) $ tends to a  $\chi^2_p$ limiting distribution
  as $n \rightarrow \infty$ under  $H_0$. (Rao,1948).
\end{definition}

To specify  the Rao's score test statistic for  hypothesis test (\ref{H1}), we suppose the sample $\chi=(\bx_1,\cdots,\bx_n)$ follows a normal distribution with mean parameter $\bmu$ and covariance matrix $\bS$.  Denote 
$\btea=(\bmu', \mb{vec}(\bS)')' $, where $\mb{vec} ( \cdot )$ is the vectorization  operator.   First, the  logarithm  of the density of the sample $\chi$ is written  as  
\[l(\chi,\btea)=-\frac{np}{2}\mb{ln}(2\pi)-\frac{n}{2}\ln|\bS|-\frac{1}{2}\sum\limits_{i=1}^{n}\mb{tr}\left(\bS^{-1}(\bx_i-\bmu)(\bx_i-\bmu)'\right).\] 
By the definition
\[U(\chi,\btea)=\frac{\md}{\md \btea}l(\chi,\btea),\]
where $\frac{\md}{\md \btea}=
\left(
\begin{array}{cc}
\frac{\md}{\md \bmu}   \\
\frac{\md}{\md \mb{vec}(\bS)}
\end{array}
\right)
$ is a $p(p+1) \times 1$ vector, then the score vector  for the sample is 
\begin{equation}
U(\chi, \btea)=:\left(
\begin{array}{c}
U_1 (\chi, \btea) \\
U_2(\chi, \btea)
\end{array}
\right)=\left(
\begin{array}{c}
n\bS^{-1}(\hat\bmu-\bmu)  \\
\displaystyle\frac{n}{2}\mb{vec}(\bS^{-1}(\bA\bS^{-1}-\bI_p))
\end{array}
\right) \label{scorevec}
\end{equation}
where 
\begin{equation}
\hat\bmu=\displaystyle\frac{1}{n}\sum\limits_{i=1}^n\bx_i \quad  \mbox{ and}
 \quad  \bA=\displaystyle\frac{1}{n}\sum\limits_{i=1}^{n}(\bx_i-\bmu)(\bx_i-\bmu)'.\label{muS}
\end{equation}  
Derivations of  (\ref{scorevec}) is specified in  the Appendix \ref{A1}.

Secondly,  the Hessian matrix 
$H(\chi, \btea)=\frac{\md^2}{\md \btea^2}l(\chi,\btea)=: 
\left(
\begin{array}{ccc}
 H_{11} & H_{12}    \\
 H_{21} & H_{22}   
\end{array}
\right)
$, where the part of the parameter $\bS$ is
\begin{eqnarray}
H_{22}&=&\frac{n}{2}\frac{\md \mb{vec}(\bS^{-1}(\bA\bS^{-1}-\bI_p))}{\md \mb{vec}'(\bS)}\label{H22d}\label{H22eq}\\ 
&=&\frac{n}{2}\frac{\md \mb{vec}(\bS^{-1})}{\md \mb{vec}'(\bS)} \frac{\md \mb{vec}(\bS^{-1}\bA\bS^{-1}-\bS^{-1})}{\md  \mb{vec}'(\bS^{-1})}\nonumber \\
&=& -\frac{n}{2} (\bS^{-1} \otimes \bS^{-1} )(\bA\bS^{-1} \otimes \bI_p+\bI_p\otimes \bA\bS^{-1}-\bI_{p^2}). \nonumber
\end{eqnarray}
Details of derivations for (\ref{H22eq}) can be found in  the Appendix \ref{A1}.
Because $I (\bX,\btea)=- \mb E (H(\bX,\btea))$ and $\mb{E} (\bA)=\mb E [(\bX-\bmu)(\bX-\bmu)']=\bS$,  where $\bA$ is defined in (\ref{muS}),  then the information matrix 
 \[I(\chi, \btea)=: 
\left(
\begin{array}{ccc}
 I_{11}(\chi, \btea) & I_{12} (\chi, \btea)   \\
 I_{21}(\chi, \btea) & I_{22} (\chi, \btea)  
\end{array}
\right),
\]where  the part  for $\bS$ is 
\[I_{22}(\chi, \btea)=\frac{n}{2} (\bS^{-1} \otimes \bS^{-1} )\]

If there are no restrictions on $\bmu$,  the parameter $\bmu$ in the 
score vector is replaced by its maximum likelihood estimator $\hat\bmu$. Then the part
of the score vector corresponding to $\bmu$ turns to ${\bf 0}$, and  only the second part of the score vector $U_2(\chi, \btea)$  and the  element $I_{22}(\chi, \btea)$ of the information matrix contribute to the calculation of the 
Rao's score test statistic.(See \cite{Gombay}). Therefore, the Rao's score test statistic  for hypothesis test (\ref{H1}) can be calculated by the expressions of 
$U_2(\chi,  \btea)$  and $I_{22}(\chi, \btea)$, where   $\bmu$  and  $\bA$  are substituted  with sample mean $\hat\bmu$ and  the  sample covariance matrix
\begin{equation}
\hat\bS=\displaystyle\frac{1}{n}\sum\limits_{i=1}^{n}(\bx_i-\hat\bmu)(\bx_i-\hat\bmu)', \label{Sigmahat}
\end{equation}  
respectively.  Also $\bS_0$ is  instead of $\bS$ under the null hypothesis.
Thus, we have 
\begin{proposition}
Rao's score test statistic  for testing $H_0 : \bS=\bS_0 $ with no constrains on $\bmu$ has the 
following form 
\begin{equation}
\mb{RST}(\chi, \bS_0)=\frac{n}{2}\mb{tr}[(\bS_0^{-1}\widehat{\bS}-\mb{\bI}_p)^2] \label{RST1}
\end{equation}
where $\chi=(\bx_1,\cdots,\bx_n)$ is a sample from $N_p(\bmu,\bS)$, and the test statistic 
$\mb{RST}(\chi, \bS_0)$ tends to a $\chi^2$ distribution  with freedom  degree $\displaystyle\frac{p(p+1)}{2}$ under $H_0$ when $n \rightarrow \infty$.
\end{proposition}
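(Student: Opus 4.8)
The plan is to prove the two assertions in turn: first the explicit formula \eqref{RST1}, a matter of linear algebra, and then the limiting $\chi^2_{p(p+1)/2}$ law, which reduces to the classical central limit theorem for the sample covariance matrix. \textbf{Closed form.} Starting from $\mathrm{RST}(\chi,\btea_0)=U'(\chi,\btea_0)I(\chi,\btea_0)^{-1}U(\chi,\btea_0)$ and the reduction recorded above --- with $\bmu$ unrestricted the block $U_1$ vanishes upon substituting $\hat\bmu$, and since the Gaussian information matrix is block diagonal in $(\bmu,\mathrm{vec}(\bS))$ only $U_2$ and $I_{22}$ enter the quadratic form (cf.\ \cite{Gombay}) --- I would evaluate \eqref{scorevec} at $\bmu=\hat\bmu$, $\bA=\hat\bS$, $\bS=\bS_0$. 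This gives $U_2=\tfrac n2\,\mathrm{vec}(\mathbf{M})$ with $\mathbf{M}=\bS_0^{-1}\hat\bS\bS_0^{-1}-\bS_0^{-1}$, while $I_{22}=\tfrac n2(\bS_0^{-1}\otimes\bS_0^{-1})$ is invertible with inverse $\tfrac2n(\bS_0\otimes\bS_0)$, so $\mathrm{RST}=\tfrac n2\,\mathrm{vec}(\mathbf{M})'(\bS_0\otimes\bS_0)\,\mathrm{vec}(\mathbf{M})$. Applying $(\mathbf{B}'\otimes\mathbf{A})\,\mathrm{vec}(\mathbf{X})=\mathrm{vec}(\mathbf{A}\mathbf{X}\mathbf{B})$ with $\mathbf{A}=\mathbf{B}=\bS_0$ turns this into $\tfrac n2\,\mathrm{vec}(\mathbf{M})'\mathrm{vec}(\bS_0\mathbf{M}\bS_0)=\tfrac n2\,\mathrm{vec}(\mathbf{M})'\mathrm{vec}(\hat\bS-\bS_0)$, and then $\mathrm{vec}(\mathbf{X})'\mathrm{vec}(\mathbf{Y})=\mathrm{tr}(\mathbf{X}'\mathbf{Y})$ together with the symmetry of $\mathbf{M}$ gives $\tfrac n2\,\mathrm{tr}(\mathbf{M}(\hat\bS-\bS_0))$; expanding the product and cancelling terms leaves exactly $\tfrac n2\,\mathrm{tr}[(\bS_0^{-1}\hat\bS-\bI_p)^2]$. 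This part is routine.

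\textbf{Limiting law.} Under $H_0$ set $\mathbf{W}=\bS_0^{-1/2}\hat\bS\bS_0^{-1/2}$. Since $\bS_0^{-1}\hat\bS$ and $\mathbf{W}$ are similar, $\mathrm{RST}=\tfrac n2\,\mathrm{tr}[(\mathbf{W}-\bI_p)^2]=\tfrac12\,\mathrm{tr}\bigl[\bigl(\sqrt n\,(\mathbf{W}-\bI_p)\bigr)^2\bigr]$. Because the vectors $\bS_0^{-1/2}(\bx_i-\bmu)$ are i.i.d.\ $N_p(\mathbf{0},\bI_p)$, the matrix $n\mathbf{W}$ is central Wishart, and the classical CLT for the sample covariance matrix yields $\sqrt n\,(\mathbf{W}-\bI_p)\Rightarrow\mathbf{G}$, a symmetric Gaussian matrix whose entries on and above the diagonal are independent with $G_{ii}\sim N(0,2)$ and $G_{ij}\sim N(0,1)$ for $i<j$. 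By the continuous mapping theorem $\mathrm{RST}\Rightarrow\tfrac12\,\mathrm{tr}(\mathbf{G}^2)=\sum_{i=1}^{p}\tfrac12 G_{ii}^2+\sum_{i<j}G_{ij}^2$, a sum of $p+\binom p2=\tfrac{p(p+1)}2$ independent $\chi^2_1$ variables, hence $\chi^2_{p(p+1)/2}$; using the divisor $n$ rather than $n-1$ in $\hat\bS$ and centring at $\hat\bmu$ perturb $\mathbf{W}$ only by $O_p(n^{-1})$ and leave the limit unchanged.

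\textbf{Main obstacle.} The delicate point is not any single computation but the bookkeeping of the degrees of freedom. Because $\bS$ is symmetric, the $\mathrm{vec}$ parametrization underlying $U_2$ and $I_{22}$ is redundant: although $I_{22}$ is an invertible $p^2\times p^2$ matrix, the score vector $U_2$ is always the $\mathrm{vec}$ of a \emph{symmetric} matrix, hence lies in the $\tfrac{p(p+1)}2$-dimensional subspace of such vectors, which is why the quadratic form excites only that many asymptotically independent Gaussian directions --- one must resist reading off $p^2$ (or $p(p+1)$). The direct argument through $\mathbf{G}$ above makes this transparent, the symmetry of $\mathbf{G}$ being precisely what halves the diagonal contributions and produces $\tfrac{p(p+1)}2$ rather than $p^2$ summands. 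If one prefers instead to invoke the general large-sample theory of Rao's score test \cite{Rao}, the same reconciliation must be carried out by passing from $\mathrm{vec}$ to $\mathrm{vech}$ through the duplication matrix.
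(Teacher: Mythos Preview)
Your derivation of the closed form \eqref{RST1} is essentially identical to the paper's: both plug $U_2$ and $I_{22}^{-1}$ into the quadratic form, use the identity $(\bS_0\otimes\bS_0)\,\mathrm{vec}(\mathbf{M})=\mathrm{vec}(\bS_0\mathbf{M}\bS_0)=\mathrm{vec}(\hat\bS-\bS_0)$, and then collapse $\mathrm{vec}'\cdot\mathrm{vec}$ to a trace. The only cosmetic difference is that the paper writes the score as $\mathrm{vec}(\bS_0^{-1}(\hat\bS\bS_0^{-1}-\bI_p))$ and goes straight to the trace, whereas you spell out the intermediate $\mathrm{tr}(\mathbf{M}(\hat\bS-\bS_0))$ and expand; the algebra is the same.

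Where you genuinely go further is the limiting law. The paper's proof block does not address it at all: the $\chi^2_{p(p+1)/2}$ claim is simply inherited from the general statement in Definition~2.1 (citing Rao, 1948) that $\mathrm{RST}$ tends to a chi-square with as many degrees of freedom as there are free parameters under test. Your argument --- pass to $\mathbf{W}=\bS_0^{-1/2}\hat\bS\bS_0^{-1/2}$ by similarity, invoke the CLT $\sqrt{n}(\mathbf{W}-\bI_p)\Rightarrow\mathbf{G}$ for the Wishart, and compute $\tfrac12\mathrm{tr}(\mathbf{G}^2)$ directly --- is an elementary, self-contained alternative that avoids appealing to the abstract score-test theory. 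It also makes the degree-of-freedom count transparent, as you note in your ``main obstacle'' paragraph: the symmetry of $\mathbf{G}$ is precisely what produces $p+\binom{p}{2}$ rather than $p^2$ independent $\chi^2_1$ summands, resolving the vec/vech ambiguity that the paper leaves implicit. So your route is correct and more informative on this point, at the cost of being specific to the Gaussian model rather than a corollary of the general Rao machinery.
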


\begin{proof}
\begin{eqnarray*}
&&\mb{RST}(\chi, \bS_0)\\
&=&\frac{n}{2}\mb{vec}'(\bS_0^{-1}(\hat\bS\bS_0^{-1}-\bI_p)) 
[\frac{n}{2} (\bS_0^{-1} \otimes \bS_0^{-1} )]^{-1}
\frac{n}{2}\mb{vec}(\bS_0^{-1}(\hat\bS\bS_0^{-1}-\bI_p))\\
&=&\frac{n}{2}\mb{vec}'(\bS_0^{-1}(\hat\bS\bS_0^{-1}-\bI_p)) 
\mb{vec}(\hat \bS-\bS_0)\\
&=&\frac{n}{2}\mb{tr}[(\bS_0^{-1}\widehat{\bS}-\mb{\bI}_p)^2]
\end{eqnarray*}
\end{proof}

For some special cases are listed in the corollaries as following.
\begin{corollary}
Rao's score test statistic  for testing $H_0 : \bS=\bI_p $ with no constrains on $\bmu$ has the 
following form 
$$
\mb{RST}(\chi, \bI_p)=\frac{n}{2}\mb{tr}[(\widehat{\bS}-\mb{\bI}_p)^2]
$$
where $\chi=(\bx_1,\cdots,\bx_n)$ is a sample from $N_p(\bmu,\bS)$, and the test statistic 
$\mb{RST}(\chi, \bI_p)$ tends to a $\chi^2$ distribution  with freedom  degree $\displaystyle\frac{p(p+1)}{2}$ under $H_0$ when $n \rightarrow \infty$.
\end{corollary}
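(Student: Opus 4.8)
The first half of the statement is an immediate specialization of the preceding proposition: taking $\bS_0=\bI_p$ (which is its own inverse) in $\mb{RST}(\chi,\bS_0)=\frac{n}{2}\mb{tr}[(\bS_0^{-1}\widehat{\bS}-\bI_p)^2]$ collapses the expression to $\frac{n}{2}\mb{tr}[(\widehat{\bS}-\bI_p)^2]$. So all of the real work lies in identifying the limiting null distribution, and the plan is to present two complementary routes.

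The quick route is to invoke the general large-sample theory of Rao's score test recalled in the Definition: with the nuisance mean $\bmu$ profiled out (so that only $U_2$ and $I_{22}$ contribute, exactly as in the proof of the preceding proposition), $\mb{RST}(\chi,\bI_p)$ converges under $H_0$ to a $\chi^2$ whose degrees of freedom equal the number of functionally independent parameters being tested. The one point that needs care --- and the only genuine obstacle here --- is that, although $\mb{vec}(\bS)$ carries $p^2$ coordinates and $I_{22}=\frac{n}{2}(\bS^{-1}\otimes\bS^{-1})$ is written on that full space, $\bS$ is symmetric; the score $U_2$ lives in the $\frac{p(p+1)}{2}$-dimensional subspace of vectorized symmetric matrices, and the quadratic form $U_2'I_{22}^{-1}U_2$ must be read on that subspace (equivalently, one passes to $\mb{vech}(\bS)$ via the duplication matrix). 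Carrying out that reduction shows the effective dimension, hence the degrees of freedom, is $\frac{p(p+1)}{2}$.

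For a self-contained verification I would instead argue directly from the classical Wishart CLT at fixed $p$. Splitting by diagonal and off-diagonal entries, $\frac{n}{2}\mb{tr}[(\widehat{\bS}-\bI_p)^2]=\frac{n}{2}\sum_{i=1}^{p}(\widehat{S}_{ii}-1)^2+n\sum_{i<j}\widehat{S}_{ij}^2$. Under $H_0$ and normality, as $n\to\infty$ the variables $\sqrt{n}(\widehat{S}_{ii}-1)$ are asymptotically $N(0,2)$, the $\sqrt{n}\,\widehat{S}_{ij}$ with $i<j$ are asymptotically $N(0,1)$, and the $\frac{p(p+1)}{2}$ limiting variables are mutually independent (the cross-covariances of distinct entries of a Wishart matrix vanish in the limit when $\bS=\bI_p$). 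Hence each $\frac{n}{2}(\widehat{S}_{ii}-1)^2\Rightarrow\chi^2_1$ and each $n\widehat{S}_{ij}^2\Rightarrow\chi^2_1$, all independent, so the sum converges to a $\chi^2$ with $p+\binom{p}{2}=\frac{p(p+1)}{2}$ degrees of freedom. The work along this route is purely the bookkeeping of the variance normalizations ($2$ on the diagonal, $1$ off the diagonal) and of the asymptotic independence; no delicate estimate is required since $p$ is held fixed. I expect the degrees-of-freedom count --- correctly accounting for the symmetry of $\bS$ --- to be the only subtle step.
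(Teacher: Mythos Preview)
Your proposal is correct. The paper gives no separate proof for this corollary at all: it is stated immediately after Proposition~2.1 and is understood to follow by substituting $\bS_0=\bI_p$, which is precisely your first paragraph. Your additional material --- the duplication-matrix remark explaining why the degrees of freedom are $\tfrac{p(p+1)}{2}$ rather than $p^2$, and the direct Wishart-CLT verification via the diagonal/off-diagonal split --- goes beyond what the paper provides (the paper simply inherits the $\chi^2_{p(p+1)/2}$ claim from the proposition and from the general Rao score-test theory cited in the Definition), but it is a sound and useful elaboration rather than a different route.
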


\begin{corollary}
Rao's score test statistic  for testing $H_0 : \bS=\gamma\bI_p $ with no constrains on $\bmu$ has the 
following form 
$$
\mb{RST}(\chi, \gamma\bI_p)=\frac{n}{2}\mb{tr}[(\frac{p}{\mb{tr}(\widehat\bS)}\widehat{\bS}-\mb{\bI}_p)^2]
$$
where $\chi=(\bx_1,\cdots,\bx_n)$ is a sample from $N_p(\bmu,\bS)$, and the test statistic 
$\mb{RST}(\chi, \gamma\bI_p)$ tends to a $\chi^2$ distribution  with freedom  degree $\displaystyle\frac{p(p+1)}{2}-1$ under $H_0$ when $n \rightarrow \infty$.
\end{corollary}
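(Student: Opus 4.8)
The plan is to treat $\gamma$ as a nuisance parameter and to recover the statistic from the formula~(\ref{RST1}) by the same substitution device used for Proposition~2.1, so that the algebraic part is essentially free and the only real work is the degrees-of-freedom count. Since there are no constraints on $\bmu$, I first replace it by $\hat\bmu$ exactly as before, so only $U_2(\chi,\btea)$ and $I_{22}(\chi,\btea)$ matter. Because $\gamma$ is an unknown scalar I also profile it out of the Gaussian likelihood restricted to $\bS=\gamma\bI_p$: with $\bmu$ set to $\hat\bmu$, the log-likelihood reduces (up to a constant) to $-\tfrac{np}{2}\ln\gamma-\tfrac{n}{2\gamma}\mb{tr}(\widehat{\bS})$, whose maximiser is the restricted MLE $\hat\gamma=\mb{tr}(\widehat{\bS})/p$. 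Substituting $\bS_0=\hat\gamma\bI_p$ into~(\ref{RST1}) and using $(\hat\gamma\bI_p)^{-1}=\bigl(p/\mb{tr}(\widehat{\bS})\bigr)\bI_p$ gives at once
\[
\mb{RST}(\chi,\gamma\bI_p)=\frac{n}{2}\mb{tr}\!\left[\Bigl(\tfrac{p}{\mb{tr}(\widehat{\bS})}\widehat{\bS}-\bI_p\Bigr)^{2}\right],
\]
which is the asserted form.

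For the null distribution I would run the standard Rao-score argument with an estimated nuisance parameter. Evaluated at $\bS_0=\gamma\bI_p$ the cross block $I_{12}$ vanishes and $I_{22}=\tfrac{n}{2\gamma^{2}}\bI_{p^{2}}$ is a multiple of the identity, so the information contributed by $\gamma$ sits precisely along the direction $\mb{vec}(\bI_p)$; moreover the normal equation defining $\hat\gamma$ says $\mb{tr}(\widehat{\bS}-\hat\gamma\bI_p)=0$, i.e. $U_2(\chi,\hat\gamma\bI_p)\propto\mb{vec}(\widehat{\bS}-\hat\gamma\bI_p)$ is Euclidean-orthogonal to $\mb{vec}(\bI_p)$. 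A Taylor expansion of $U_2(\chi,\hat\gamma\bI_p)$ about the true $\gamma$, combined with the $\sqrt n$-consistency of $\hat\gamma$ and this orthogonality, then shows that $U_2'I_{22}^{-1}U_2$ is asymptotically the squared length of the projection of an asymptotically Gaussian vector onto $\mb{range}(\tfrac12(\bI_{p^2}+\mathbf K))\cap\mb{vec}(\bI_p)^{\perp}$, where $\mathbf K$ is the $p^{2}\times p^{2}$ commutation matrix. Counting dimensions yields a $\chi^{2}$ limit with $\tfrac{p(p+1)}{2}-1$ degrees of freedom; equivalently, $\bS$ carries $\tfrac{p(p+1)}{2}$ free parameters while $\{\gamma\bI_p:\gamma>0\}$ carries one, so $H_0$ imposes $\tfrac{p(p+1)}{2}-1$ independent restrictions.

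The delicate point, and the one I expect to be the main obstacle, is verifying that the correct limiting covariance is the projection $\tfrac12(\bI_{p^2}+\mathbf K)$ and not $\bI_{p^{2}}$: although the formal information block $I_{22}$ has full rank $p^{2}$, the score $U_2$ is the vectorisation of a \emph{symmetric} matrix, so that $\mathrm{Var}(U_2)=I_{22}\cdot\tfrac12(\bI_{p^2}+\mathbf K)$ rather than $I_{22}$. Tracking this carefully is what produces the rank $\tfrac{p(p+1)}{2}$ (not $p^{2}$) in the known-$\bS_0$ case of Proposition~2.1, after which estimating $\gamma$ removes one further dimension exactly as described. The remaining steps -- the Wishart covariance computation for $\mb{vec}(\widehat{\bS})$, the Slutsky/Taylor bookkeeping, and the check that $\mb{vec}(\bI_p)$ lies in the symmetric subspace -- are routine.
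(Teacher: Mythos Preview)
Your derivation of the statistic is exactly what the paper does: it simply substitutes $\bS_0=\hat\gamma\bI_p$ with $\hat\gamma=\mb{tr}(\widehat{\bS})/p$ into~(\ref{RST1}), and that is the whole proof given there. Your justification of the $\chi^2_{p(p+1)/2-1}$ limit via the commutation-matrix projection and the orthogonality to $\mb{vec}(\bI_p)$ goes well beyond the paper, which merely asserts the degrees of freedom without argument; your extra work is sound and supplies what the paper omits.
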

\begin{proof}
Replace the $\bS_0$ by $\hat\gamma\bI_p$ according to (\ref{RST1}), where $\hat\gamma=\displaystyle\frac{\mb{tr}(\widehat\bS)}{p}$ is the maximum likelihood estimator of $\gamma$.
\end{proof}

\subsection{CLT for LSS of a large dimensional sample covariance matrix }

As seen above, the statistics of Rao's score test     for the hypothesis (\ref{H1}) can be encoded by the trace  function  of a  matrix, i.e. a function of the eigenvalues of  some matrix concerned  with the sample covariance matrix. That is exactly what we need in the corrections to the score test for large dimensional cases. Consequently,   a quick survey of the  CLT for  LSS of a large dimensional sample  covariance matrix referred in  \cite{BS04} is   presented below, which is a basic tool for   improvements on  the  classical Rao's score test. Because  the original  version of the theorem has a strict condition  that the 4th moment  of the variable is $3+\delta$, where $\delta $ is a positive constant tending to 0,  so  we derive an enhanced  version excluding this requirement for a more widely usage.  Before quoting, we first introduce some basic concepts and notations. 

Suppose   $ \{\xi_{ki} \in \mathbb{C}, i, k = 1, 2, \cdots \}$ be a
double array of $\mb{i.i.d.}$  random  variables with mean 0 and variance
$1$.
 Then  $(\bxi_1, \cdots,\bxi_{n}) $ is  regarded as an $\mb{i.i.d.}$
sample from some $p$-dimensional distribution with mean ${\bf 0}_p$ and
covariance matrix $\bI_p$, where $\bxi_i = (\xi_{1i}, \xi_{2i},\cdots , \xi_{pi})'$. 
So the sample covariance matrix is
\begin{equation}
\mS_n={1\over{n}}\sum\limits_{i=1}^{n}\bxi_i\bxi_i',\label{Sn}
\end{equation}
where we use conjugate transpose  for the complex variables instead.
For simplicity we use  $F^q, F^{q_n}$  to denote the Mar\v{c}enko-Pastur law of index
 $q~ \mbox{and}  ~ q_n$ respectively, where $q_n=\frac{p}n \rightarrow q \in [0, +\infty)$.
 $F_n^{\mS_n}$ marks the empirical spectral distribution(ESD) of the matrix $\mS_n$,  which is defined as 
 \[
F_n^{\mS_n}(x) = \frac{1}{p}\sum\limits_{i=1}^{p}\textbf{ 1}_{\lambda_i^{\mS_n}
\leq x}, \quad \quad x \in \mathbb{R},
\]
  where $\left(\lambda_i^{\mS_n}\right)$ are the real eigenvalues of  the $p
\times p$ square matrix $\mS_n$.   Define 
\[ \int f (x)\md F_n^{\mS_n}(x)=
\frac{1}{p}\sum\limits_{i=1}^p f(\lambda_i^{\mS_n}),
\]
which is a  so-called linear spectral statistic (LSS) of the random matrix ${\mS_n}$. Based on this,  we consider the empirical process
$G_n : = \{G_n(f)\}$ indexed by $\mathcal{A}$ ,
\begin{equation}
G_n(f)= p\cdot \int_{-\infty}^{+\infty} f(x)\left[F^{\mS_n}_n-
F^{q_n}\right] (\md x), ~~~~~~\quad f \in  \mathcal{A},\label{Gdef}
\end{equation}
 where $\mathcal{U}$ is an open set of the complex plane
  including   $[ a(q),b(q)]$,
  where $a(q)=(1-\sqrt{q})^2$ and $b(q)= (1+\sqrt{q})^2 ]$,
and $\mathcal{A}$   be the set of analytic functions $f :
\mathcal{U} \mapsto \mathbb{C}.$  Actually, the  contours in $\mathcal{U}$ should contain the whole supporting set of the LSD $F^{q}$.
It is  known that if $q \leq 1$, exactly it is $[ a(q),b(q)]$. If $q>1$,  the contours should enclose the whole  support $\{0\}  \cup [ a(q),b(q)]$, because the $F^q$ has a positive mass at the origin at this time.  However,  due to the exact separation theorem in \cite{BS99}, for large enough $p$ and $n$, the discrete mass at the origin  will coincide  with that of $F^{q}$. So we can restrict 
the integral $G_n(f)$ on the contours  only enclosed the continuous part of the LSD  $F^{q}$.

Define
\[\kappa=
\left\{
\begin{array}{cc}
 2, & \mbox{if the~}  \bxi- \mbox{variables are real,\quad\quad} \\
  1,& \mbox{if the~}  \bxi- \mbox{variables are complex.} 
\end{array}
\right.
\]
Then  an  enhanced  version   of Theorem 1.1 in  \cite{BS04} is provided, which will play a fundamental role in next derivations.

\begin{lemma}

Assume: \\
   $ f_1, \cdots ,f_k \in \mathcal{A}$,  $\{\xi_{ij}\}$  are  $\mb {i.i.d.}$
random variables, such that  $\mb{E}\xi_{11}=0,~ \mb{E}{|\xi_{11}|^2}=\kappa-1,
~\mb{E}{|\xi_{11}|}^4 < \infty$  and the $\{\xi_{ij}\}$ satisfy the condition
\[\frac{1}{np}\sum\limits_{ij} \mb{E}|\xi_{ij}|^4I(|\xi_{ij}|\geq \sqrt{n}\eta) \rightarrow 0\]
for any fixed $\eta>0$.  Moreover,
 $\displaystyle\frac{p}{n}=q_n \rightarrow q
\in [0, +\infty) $
as  $n, p \rightarrow \infty$  and $E(\xi_{11}^4)=\beta+\kappa+1,$ where $\beta$ is a  constant.\\[1mm]
Then the random vector $\left(G_n(f_1), \cdots 
,  G_n(f_k) \right)$  forms a tight sequence by the  index $n$, and  
it  weakly converges to a $k$-dimensional Gaussian
vector with mean vector
\begin{eqnarray}
&&\mu(f_j)=-\frac{\kappa-1}{2\pi i} \oint f_j(z) \frac{q\underline{m}^3(z)(1+\underline{m}(z))}{[(1-q)\underline{m}^2(z)+2\underline{m}(z)+1]^2} \md z \label{04mean1}\\
&&\quad -\frac{\beta q }{2 \pi i} \oint f_j(z)\frac{\underline{m}^3(z)}{(1+\underline{m}(z))[(1-q)\underline{m}^2(z)+2\underline{m}(z)+1]} \md z,
\label{04mean2}
\end{eqnarray} 
and covariance
function
\begin{eqnarray}
&&\upsilon\left(f_j,
f_\ell\right)=-\frac{\kappa}{4\pi^2}\oint\oint\frac{f_j(z_1)f_\ell(z_2)}{(\underline{m}(z_1)-\underline{m}(z_2))^2}
\md\underline{m}(z_1)\md \underline{m}(z_2)  \label{04var1}\\
&&\quad-\frac{\beta q}{4\pi^2}\oint\oint\frac{f_j(z_1)f_\ell(z_2)}{(1+\underline{m}(z_1))^2(1+\underline{m}(z_2))^2}
\md\underline{m}(z_1)\md \underline{m}(z_2), \label{04var2}
\end{eqnarray}
where  $ j,\ell \in \{1, \cdots,
k\}$, and  $\underline{m}(z)\equiv m_{\underline{F}^q}(z)$ is the
Stieltjes Transform of ~ $\underline{F}^q\equiv (1-q)I_{[0,
    \infty)}+qF^q$. The contours all contain the support of $F^q$ and 
   non overlapping  in both (\ref{04var1})  and    (\ref{04var2}).\\
\label{CLT}
\end{lemma}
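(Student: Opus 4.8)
The plan is to follow the Stieltjes-transform and martingale-decomposition scheme of Bai and Silverstein \cite{BS04}, but with two modifications: replace their exact fourth-moment hypothesis by a truncation argument, and carry the extra fourth-cumulant contributions (the $\beta$-terms) through the limiting mean and covariance.

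First I would reduce to truncated entries. Choose $\eta_n\downarrow 0$ with $\eta_n\sqrt n\to\infty$, put $\hat\xi_{ij}=\xi_{ij}I(|\xi_{ij}|\le\eta_n\sqrt n)$, and then recenter and rescale so that the entries have mean $0$ and variance $\kappa-1$; let $\hat{\mS}_n$ be the resulting sample covariance matrix and $\hat G_n$ the associated process. Using the Lindeberg-type condition $\frac{1}{np}\sum_{ij}\mb E|\xi_{ij}|^4 I(|\xi_{ij}|\ge\sqrt n\eta)\to 0$ together with rank and operator-norm inequalities for $\mS_n-\hat{\mS}_n$, one shows that $G_n(f)-\hat G_n(f)\to 0$ in probability for each $f\in\mathcal A$, so it suffices to prove the lemma for the truncated model. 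The point of truncating is that now $\mb E|\hat\xi_{ij}|^{k}\le(\eta_n\sqrt n)^{k-4}\,\mb E|\xi_{11}|^{4}$ for $k\ge 4$, which supplies exactly the higher-order moment control that \cite{BS04} extracted from their stronger assumption.

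Next, by Cauchy's integral formula write $G_n(f)=-\frac{1}{2\pi i}\oint f(z)M_n(z)\,\md z$ over a contour in $\mathcal U$ enclosing the continuous part of the support of $F^q$ --- invoking the exact separation theorem of \cite{BS99} to discard the atom at the origin when $q>1$ --- where $M_n(z)=p\big(m_{F_n^{\mS_n}}(z)-m_{F^{q_n}}(z)\big)$. Decompose $M_n=M_n^{(1)}+M_n^{(2)}$ into the centered part $M_n^{(1)}(z)=p\big(m_{F_n^{\mS_n}}(z)-\mb E m_{F_n^{\mS_n}}(z)\big)$ and the bias $M_n^{(2)}(z)=p\big(\mb E m_{F_n^{\mS_n}}(z)-m_{F^{q_n}}(z)\big)$. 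For $M_n^{(1)}$ I would use the martingale decomposition $M_n^{(1)}(z)=\sum_{j}(\mb E_j-\mb E_{j-1})\mb{tr}(\text{resolvent})$, with $\mb E_j$ the conditional expectation given the first $j$ sample columns, and apply the martingale CLT: one identifies the limit of the sum of conditional variances, where the diagonal term $\sum_i|[\cdot]_{ii}|^2$ produces the fourth-cumulant term \eqref{04var2} and the remaining part yields the Gaussian term \eqref{04var1}, and one checks a Lyapunov condition using the truncated moment bounds. Tightness of $\{M_n^{(1)}(z)\}$ along the contour follows from a uniform estimate $\mb E|M_n^{(1)}(z_1)-M_n^{(1)}(z_2)|^2\le K|z_1-z_2|^2$.

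The main obstacle is the bias term $M_n^{(2)}(z)$: one must expand $\mb E m_{F_n^{\mS_n}}(z)$ one order beyond the Mar\v{c}enko-Pastur limit and pin down the $O(1)$ value of $p\big(\mb E m_{F_n^{\mS_n}}(z)-m_{F^{q_n}}(z)\big)$, keeping track of both the $\kappa$-dependent part and the fourth-cumulant part $\beta$ (the latter now nonzero), which together furnish the two integrands in \eqref{04mean1}--\eqref{04mean2}. Assembling the pieces gives joint asymptotic normality of $\big(G_n(f_1),\dots,G_n(f_k)\big)$, and evaluating the contour integrals after the change of variable $z\mapsto\underline m(z)$ --- using $z=-\underline m^{-1}+q(1+\underline m)^{-1}$, whose derivative is $\underline m^{-2}(1+\underline m)^{-2}\big[(1-q)\underline m^2+2\underline m+1\big]$ --- converts the resolvent expansions into the stated closed forms for $\mu(f_j)$ and $\upsilon(f_j,f_\ell)$.
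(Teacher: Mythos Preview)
Your proposal is correct and outlines exactly the Bai--Silverstein machinery that underlies the result: truncation, the Cauchy representation $G_n(f)=-\frac{1}{2\pi i}\oint f(z)M_n(z)\,\md z$, the split $M_n=M_n^{(1)}+M_n^{(2)}$, the martingale CLT for $M_n^{(1)}$, and the bias expansion for $M_n^{(2)}$, with the fourth-cumulant $\beta$ tracked in both pieces.

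The paper, however, does not redo this argument. Its proof is essentially a proof by citation and bookkeeping: it takes the $\kappa$-terms \eqref{04mean1} and \eqref{04var1} directly from \cite{BS04} (the case $\mb E|\xi_{11}|^4=\kappa+1$, i.e.\ $\beta=0$), and then identifies the extra $\beta$-terms by pointing to the specific places in \cite{BS04} --- their (4.10), (4.12), and (2.7) --- where the quadratic-form variance formula (their (1.15)) was invoked under the Gaussian-fourth-moment assumption. Dropping that assumption adds one explicit term to each of those displays, and the paper cites Lemma~6.2 of \cite{Zheng} to evaluate the limits of those added terms, yielding \eqref{meanplus}--\eqref{varplus} for general population $H$; the stated formulas \eqref{04mean2} and \eqref{04var2} then follow by specializing to $H(t)=I_{[1,\infty)}(t)$. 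So the paper's route is shorter but relies entirely on the reader having \cite{BS04} and \cite{Zheng} at hand; your route is self-contained and reproves what those references supply.
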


The  proof of the  Lemma \ref{CLT} is detailed in Appendix \ref{A2}.


\section{The Proposed Testing Statistics} \label{New}

 In this part,  $\chi=(\bx_1,\cdots,\bx_n)$  remains to be an independent and identically distributed  sample from 
a $p$ dimensional random vector $\bX$ with mean $\bmu$ and covariance matrix $\bS$.   For  testing the hypothesis 
   $H_0 :  \bS  =\bS_0$,  set 
$\widetilde\bxi_i=\bS_0^{-\frac{1}{2}}(\textbf{x}_i-\bmu),$
then the array $\{\widetilde\bxi_i\}_{i=1, \cdots, n}$ contains  $p$-dimensional
standardized variables under $H_0$.  If the mean parameter $\bmu$ is known,  the Lemma \ref{CLT} can be cited in a direct way because its  sample covariance matrix is  identical with $\mS_n$ in (\ref{Sn}). However, it shows  a slightly difference   with unknown $\bmu$. By \cite{ZhengBaiYao},  it is reasonable to use $n-1$  instead of $n$,  if applying the CLT in the Lemma \ref{CLT} to correct the score test in large dimensional  data with the unknown  mean parameter.  Also, the estimator of  covariance matrix  in the the corrected statistics should be changed into the unbiased one. 
Therefore, we define the unbiased  sample covariance matrix of $\{\widetilde\bxi_i\}$ as $\widetilde\bS=\bS_0^{-\frac{1}{2}}(\displaystyle\frac{n}{n-1}\widehat\bS)\bS_0^{-\frac{1}{2}}$, and  denote $\mS=\bS_0^{-1}(\displaystyle\frac{n}{n-1}\widehat\bS).$  
Because $\widetilde\bS$ has   the same LSD with 
$\mS_{n-1}$ defined in (\ref{Sn}) with $n$ substituted by $n-1$ ,  so that the matrix $\mS$  
 has the same LSD  as  $\mS_{n-1}$ due to the  positive definiteness  of $\bS_0$.  Therefore, it is natural to use $n-1$  instead of $n$ by \cite{ZhengBaiYao} when the  Lemma \ref{CLT} is applied  to  amending  the score test concerned the eigenvalues  of $\mS$. 
 Let 
\begin{equation}
\widetilde{\mb{RST}}(\chi, \bS_0)=\frac{n}{2}\mb{tr}[(\mS-\mb{\bI}_p)^2], \label{RSTt}
\end{equation}
then the correction to Rao's score test is hold in the following theorem:  

\begin{theorem}
Suppose that the conditions of Lemma \ref{CLT} hold, for hypothesis test $H_0 :  \bS  =\bS_0$,  $\widetilde{\mb{RST}}(\chi, \bS_0)$ is defined as (\ref{RSTt}),  set $p/(n-1)=q_n \rightarrow q \in [0, +\infty) $ ,  $q_n \neq 1$  and  $g(x)= (x-1)^2$.  Then,  under $H_0$ and when $n\rightarrow\infty$, the correction to Rao's score test statistics is 
\begin{equation}
  CRST(\chi, \bS_0)=\upsilon(g)^{-\frac{1}{2}}\left[ \frac{2}{n}\widetilde{\mb{RST}}(\chi, \bS_0)-p \cdot
    F^{q_n}(g)- \mu(g)\right] \Rightarrow N \left( 0,
  1\right),
  \label{CRST1}
\end{equation}
where $F^{q_n}$ is the Mar\v{c}enko-Pastur law of  index  $ q_n$, and $ F^{q_n}(g), \mu(g)$ and $\upsilon(g)$ are calculated in (\ref{limitRST}), (\ref{meanRST}) and (\ref{varRST}),  respectively.
\label{CRSTth}
\end{theorem}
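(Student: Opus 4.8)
The plan is to identify $\tfrac{2}{n}\widetilde{\mb{RST}}(\chi,\bS_0)$ with a linear spectral statistic and then quote Lemma \ref{CLT}. Indeed, from (\ref{RSTt}),
$\tfrac{2}{n}\widetilde{\mb{RST}}(\chi,\bS_0)=\mb{tr}[(\mS-\bI_p)^2]=\sum_{i=1}^{p}(\lambda_i^{\mS}-1)^2=p\int g(x)\,\md F_n^{\mS}(x)$
with $g(x)=(x-1)^2$, a polynomial, hence entire and so an element of $\mathcal{A}$. Under $H_0$ the standardized sample $\widetilde\bxi_i=\bS_0^{-1/2}(\bx_i-\bmu)$ is an i.i.d.\ array with mean ${\bf 0}_p$ and covariance $\bI_p$ satisfying the hypotheses of Lemma \ref{CLT}; moreover $\mS=\bS_0^{-1}(\tfrac{n}{n-1}\widehat\bS)$ is similar (via conjugation by $\bS_0^{1/2}$) to the unbiased sample covariance $\widetilde\bS$ of the $\widetilde\bxi_i$, which by \cite{ZhengBaiYao} has the same LSD as the standard sample covariance matrix $\mS_{n-1}$ formed from $n-1$ such vectors. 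Thus Lemma \ref{CLT} applies verbatim with $n$ replaced by $n-1$ and $q_n=p/(n-1)$; the shift $n\mapsto n-1$ is exactly the price of estimating $\bmu$ by $\widehat\bmu$ and of using the unbiased estimator.

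Next I would apply Lemma \ref{CLT} with $k=1$, $f_1=g$. It yields that $G_n(g)=p\int g(x)\,[F_n^{\mS}-F^{q_n}](\md x)$ is tight and converges weakly to a Gaussian variable with mean $\mu(g)$ from (\ref{04mean1})--(\ref{04mean2}) and variance $\upsilon(g):=\upsilon(g,g)$ from (\ref{04var1})--(\ref{04var2}). Since $\tfrac{2}{n}\widetilde{\mb{RST}}(\chi,\bS_0)=G_n(g)+p\,F^{q_n}(g)$, we obtain $\tfrac{2}{n}\widetilde{\mb{RST}}(\chi,\bS_0)-p\,F^{q_n}(g)-\mu(g)=G_n(g)-\mu(g)\Rightarrow N(0,\upsilon(g))$, and dividing by the deterministic positive constant $\upsilon(g)^{1/2}$ gives $CRST(\chi,\bS_0)\Rightarrow N(0,1)$ by Slutsky's theorem. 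The assumption $q_n\neq1$ is retained so that the closed forms for the centering and scaling constants derived below are well defined (the Mar\v{c}enko--Pastur edge stays bounded away from the degenerate case).

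The remaining, and most computational, step is to render $F^{q_n}(g)$, $\mu(g)$, $\upsilon(g)$ in the explicit forms (\ref{limitRST}), (\ref{meanRST}), (\ref{varRST}). For the centering, $F^{q_n}(g)=\int(x-1)^2\,\md F^{q_n}=\int x^2\,\md F^{q_n}-2\int x\,\md F^{q_n}+1$, and using the first two moments of the Mar\v{c}enko--Pastur law, $\int x\,\md F^{q_n}=1$ and $\int x^2\,\md F^{q_n}=1+q_n$, this collapses to $F^{q_n}(g)=q_n$, so $p\,F^{q_n}(g)=pq_n$. For $\mu(g)$ and $\upsilon(g)$ I would parametrize the contours by the Mar\v{c}enko--Pastur map $z=z(\underline m)$, change variables from $z$ to $\underline m$, reducing the integrals in (\ref{04mean1})--(\ref{04var2}) to contour integrals of rational functions in $\underline m$, and then evaluate the residues at the relevant poles ($\underline m=0$, $\underline m=-1$, $\underline m=-1/(1\pm\sqrt q)$). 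The outcome is an explicit polynomial in $q$ consisting of a Gaussian-type part proportional to $\kappa$ and a fourth-cumulant correction proportional to $\beta$.

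The hard part will be the residue bookkeeping in these contour integrals, especially the double integral for $\upsilon(g)$ with its non-overlapping contours, where one must keep careful track of which singularities of $1/(\underline m(z_1)-\underline m(z_2))^2$ and of $1/(1+\underline m(z_i))^2$ lie inside the chosen contours, and retain the $\beta$-terms that appear precisely because the fourth-moment restriction of \cite{BS04} has been weakened in Lemma \ref{CLT}. A secondary subtlety is the rigorous justification that replacing $\bmu$ by $\widehat\bmu$ and $\widehat\bS$ by the unbiased estimator only shifts $n$ to $n-1$ asymptotically without perturbing tightness or the limiting law; for this I would appeal to \cite{ZhengBaiYao} rather than redo the perturbation argument. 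All else reduces to Slutsky's theorem and the already-granted Lemma \ref{CLT}.
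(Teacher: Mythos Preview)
Your proposal is correct and follows essentially the same path as the paper: rewrite $\tfrac{2}{n}\widetilde{\mb{RST}}$ as $p\int g\,\md F_n^{\mS}$, subtract the Mar\v{c}enko--Pastur centering $p\,F^{q_n}(g)$, invoke Lemma~\ref{CLT} (with the $n\mapsto n-1$ substitution justified via \cite{ZhengBaiYao}) to get asymptotic normality of $G_n(g)$, and then standardize. Your moment-based evaluation $F^{q_n}(g)=\int x^2\,\md F^{q_n}-2\int x\,\md F^{q_n}+1=(1+q_n)-2+1=q_n$ is actually cleaner than the paper's direct trigonometric integration in Appendix~\ref{A3}, while the residue computations you sketch for $\mu(g)$ and $\upsilon(g)$ are exactly what the paper carries out there.
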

\begin{proof}
By  the  derivation  (\ref{RSTt}), we have 
\begin{eqnarray*}
 \frac{2}{n}\widetilde{\mb{RST}}(\chi, \bS_0)&=&\mb{tr}[(\mS-\mb{\bI}_p)^2] \\
  &=& \sum\limits_{i=1}^{p} \left(\lambda_i^{\mS}-1\right)^2= p \cdot \int (x-1)^2
  \md F^{\mS}_n(x)\\
  &=&p \cdot \int g(x) \md\left(F^{\mS}_n(x)-F^{q_n}(x)\right) +p \cdot
  F^{q_n}(g),
\end{eqnarray*}
where  $(\lambda_i^{\mS}),  i=1,\cdots, p$ and $F_n^{\mS}$   are the eigenvalues and the   ESD of the matrix $\mS$, respectively.
$F^{q_n} (g)$ denotes the integral of the function $g(x) $ by the density corresponding to the Mar\v{c}enko-Pastur law of index $q_n$, that is 
\begin{eqnarray}
F^{q_n}(g)=\int_{-\infty}^{\infty}g(x) \mb{d} F^{q_n}(x) =q_n ,   \quad \text{if } q_n \neq 1,\label{limitRST}
\end{eqnarray}
which is calculated
in the Appendix \ref{A3}.
As the definition in (\ref{Gdef}), we have 
\begin{equation}
 G_n(g)= p \cdot \int g(x) d\left(F^{\mS}_n(x)-F^{q_n}(x)\right)=\frac{2}{n}\widetilde{\mb{RST}}(\chi, \bS_0)-p \cdot F^{q_n}(g).\label{singlESD-pLSD}
\end{equation}
By Lemma \ref{CLT}, $G_n(g)$  weakly converges  to a Gaussian vector  with the mean
\begin{equation}
\mu(g)=(\kappa-1)q+\beta q \label{meanRST}
\end{equation}
and variance
\begin{equation}
\upsilon(g)=2\kappa q^2(1+2q)+4\beta q^3,\label{varRST}
\end{equation}
which are calculated in  the Appendix \ref{A3}.
 Then, by (\ref{singlESD-pLSD}) we arrive at
\[
\frac{2}{n}\widetilde{\mb{RST}}(\chi, \bS_0)-p\cdot F^{q_n}(g)~\Rightarrow~ N\left(\mu(g),
\upsilon(g)\right),\label{1connec}
\]
 Finally, 
\begin{eqnarray}
 CRST(\chi, \bS_0)=\upsilon(g)^{-\frac{1}{2}}\left[ \frac{2}{n}\widetilde{\mb{RST}}(\chi, \bS_0)-p \cdot
    F^{q_n}(g)- \mu(g)\right] \Rightarrow N \left( 0,
  1\right)\nonumber
\end{eqnarray}
\end{proof}

For  the identity and  sphericity  hypothesis tests, we have  the following  corollaries:
\begin{corollary}
For testing $H_0 : \bS=\bI_p $ with no constrains on $\bmu$,  the conclusion of Theorem \ref{CRSTth} still holds,
only with the test  statistic $\widetilde{\mb{RST}}(\chi, \bI_p)$ in (\ref{CRST1}) is revised by
$$
\widetilde{\mb{RST}}(\chi, \bI_p)=\frac{n}{2}\mb{tr}[(\frac{n}{n-1}\widehat{\bS}-\mb{\bI}_p)^2] \label{CRST2}.
$$
\end{corollary}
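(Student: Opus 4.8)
The plan is to obtain this corollary as the direct specialization of Theorem \ref{CRSTth} to the choice $\bS_0 = \bI_p$. First I would observe that when $\bS_0 = \bI_p$ the whitening step $\widetilde\bxi_i = \bS_0^{-\frac12}(\bx_i - \bmu)$ collapses to $\widetilde\bxi_i = \bx_i - \bmu$, so under $H_0 : \bS = \bI_p$ the array $\{\widetilde\bxi_i\}_{i=1,\dots,n}$ is exactly a centered sample with identity covariance, and the $\mb{i.i.d.}$ structure, the finite fourth moment, and the Lindeberg-type truncation condition required by Lemma \ref{CLT} hold verbatim for its entries. Consequently the matrix $\mS = \bS_0^{-1}\bigl(\frac{n}{n-1}\widehat\bS\bigr)$ reduces to $\mS = \frac{n}{n-1}\widehat\bS$, whose LSD coincides with that of $\mS_{n-1}$ by the same positive-definiteness argument used in the proof of Theorem \ref{CRSTth}.

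Next I would substitute $\bS_0 = \bI_p$ into the defining formula $(\ref{RSTt})$, obtaining $\widetilde{\mb{RST}}(\chi, \bI_p) = \frac{n}{2}\mb{tr}[(\mS - \bI_p)^2] = \frac{n}{2}\mb{tr}[(\frac{n}{n-1}\widehat\bS - \bI_p)^2]$, which is precisely the revised statistic asserted in the corollary. The remaining ingredients of Theorem \ref{CRSTth} — the centering $p\cdot F^{q_n}(g)$ with $g(x) = (x-1)^2$, the mean correction $\mu(g) = (\kappa-1)q + \beta q$ from $(\ref{meanRST})$, and the variance $\upsilon(g) = 2\kappa q^2(1+2q) + 4\beta q^3$ from $(\ref{varRST})$ — depend only on the limiting ratio $q$ and on the moment parameters $\kappa$ and $\beta$ of the entries of $\widetilde\bxi_i$; none of these is altered by taking $\bS_0 = \bI_p$, so the entire central limit argument transfers unchanged.

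Hence I would simply invoke Theorem \ref{CRSTth} with $\bS_0$ replaced by $\bI_p$ and conclude that, under $H_0$ and as $n \to \infty$, $CRST(\chi, \bI_p) = \upsilon(g)^{-\frac12}\bigl[\frac{2}{n}\widetilde{\mb{RST}}(\chi, \bI_p) - p\cdot F^{q_n}(g) - \mu(g)\bigr] \Rightarrow N(0,1)$, with $\widetilde{\mb{RST}}(\chi, \bI_p)$ given by the displayed formula. There is no real obstacle here: the only points needing a line of justification are the identity $\bS_0^{-1}\bigl(\frac{n}{n-1}\widehat\bS\bigr) = \frac{n}{n-1}\widehat\bS$ when $\bS_0 = \bI_p$ and the remark that the hypotheses of Lemma \ref{CLT} are preserved for the centered sample, both of which are immediate.
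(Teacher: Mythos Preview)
Your proposal is correct and matches the paper's approach: the corollary is stated without proof in the paper because it is an immediate specialization of Theorem~\ref{CRSTth} to $\bS_0=\bI_p$, which is exactly what you do by observing that $\mS=\bS_0^{-1}\bigl(\frac{n}{n-1}\widehat\bS\bigr)$ collapses to $\frac{n}{n-1}\widehat\bS$ and that all centering and scaling constants are unchanged.
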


\begin{corollary}
For testing $H_0 : \bS=\gamma\bI_p $ with no constrains on $\bmu$,   the conclusion of Theorem \ref{CRSTth} still holds,
only with the test statistic  $\widetilde{\mb{RST}}(\chi, \bI_p)$ in (\ref{CRST1}) is revised by

$$
\widetilde{\mb{RST}}(\chi, \gamma\bI_p)=\frac{n}{2}\mb{tr}[(\hat\gamma^{-1}\frac{n}{n-1}\widehat{\bS}-\mb{\bI}_p)^2]
$$
where  $\hat\gamma=\displaystyle\frac{\mb{tr}(\frac{n}{n-1}\widehat\bS)}{p}$ is the maximum likelihood estimator of $\gamma$.
\end{corollary}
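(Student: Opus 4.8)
The plan is to exploit the scale invariance of the sphericity statistic to eliminate the unknown $\gamma$, reduce $\widetilde{\mb{RST}}(\chi,\gamma\bI_p)$ to a smooth function of the first two spectral moments of $\mS=\gamma^{-1}\frac{n}{n-1}\widehat\bS$, and then apply a delta-method linearization together with the joint CLT of Lemma~\ref{CLT}. Under $H_0$ the matrix $\mS$ has the Mar\v{c}enko--Pastur LSD exactly as in Theorem~\ref{CRSTth}, and since $\widehat\gamma=\mb{tr}(\frac{n}{n-1}\widehat\bS)/p=\gamma\,m_1$ with $m_1=\frac{1}{p}\mb{tr}(\mS)$, the normalized matrix $\widehat\gamma^{-1}\frac{n}{n-1}\widehat\bS$ equals $m_1^{-1}\mS$. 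Its eigenvalues are therefore $\lambda_i^{\mS}/m_1$, and a direct expansion gives
\[
\frac{2}{n}\widetilde{\mb{RST}}(\chi,\gamma\bI_p)=\sum_{i=1}^{p}\Big(\frac{\lambda_i^{\mS}}{m_1}-1\Big)^2=p\Big(\frac{m_2}{m_1^2}-1\Big),\qquad m_k=\frac{1}{p}\mb{tr}(\mS^k),
\]
which is manifestly independent of $\gamma$. This is the observation that makes the corollary reachable from Lemma~\ref{CLT} without knowing $\gamma$.

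Next I would linearize. Writing $g_1(x)=x$ and $g(x)=(x-1)^2$, the Mar\v{c}enko--Pastur moments give $F^{q_n}(g_1)=1$ and $F^{q_n}(x^2)=1+q_n$, so that $m_1=1+G_n(g_1)/p$ and $m_2=1+q_n+G_n(x^2)/p$ with $G_n(g_1),G_n(x^2)=O_p(1)$ by Lemma~\ref{CLT}. Expanding $p(m_2/m_1^2-1)$ to first order and using $G_n(g)=G_n(x^2)-2G_n(g_1)$ (the linear statistic annihilates constants) yields
\[
\frac{2}{n}\widetilde{\mb{RST}}(\chi,\gamma\bI_p)-p\,F^{q_n}(g)=\big[G_n(g)-2q_n\,G_n(g_1)\big]+o_p(1).
\]
The point to emphasize is that the data-dependent normalization by $\widehat\gamma$ does not merely reproduce $G_n(g)$: it injects the extra linear spectral statistic $-2q_n G_n(g_1)$, which is of exact order $O_p(1)$ and therefore contributes to the limit.

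I would then invoke Lemma~\ref{CLT} with $k=2$ for the pair $(g_1,g)$ to obtain that $\big(G_n(g_1),G_n(g)\big)$ is jointly asymptotically Gaussian, so that the fixed linear combination $G_n(g)-2q_n G_n(g_1)$ converges to a normal law whose centering and scaling follow from (\ref{04mean1})--(\ref{04var2}). The centering survives unchanged from Theorem~\ref{CRSTth}: the deterministic term $p\,F^{q_n}(g)$ is the same, and the mean shift is $\mu(g)-2q\,\mu(g_1)=\mu(g)$, the last equality because $\mb{E}\,\mb{tr}(\mS)=p$ exactly forces $G_n(g_1)$ to be asymptotically unbiased, i.e. $\mu(g_1)=0$. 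The scaling, by contrast, is the variance of $G_n(g)-2q_n G_n(g_1)$, namely $\upsilon(g)-4q\,\upsilon(g,g_1)+4q^2\,\upsilon(g_1,g_1)$, which I would evaluate by the same contour-integral computation as in Appendix~\ref{A3}; standardizing the centered statistic by this quantity then gives the $N(0,1)$ limit.

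The main obstacle is twofold. First, the delta-method remainder must be shown to be $o_p(1)$: this requires $m_1\to1$ with fluctuation of the exact order $O_p(1/p)$ and the higher spectral moments to stay bounded, which rests on the spectral-norm control and moment convergence underlying Lemma~\ref{CLT}, so that the quadratic terms $\frac{1}{p}\big[3(1+q_n)G_n(g_1)^2-2G_n(g_1)G_n(x^2)+\cdots\big]$ are genuinely negligible and introduce no $O(1)$ bias. Second, and more substantively, the variance bookkeeping is where the sphericity case departs from the $\bS_0$-known theorem: estimating $\gamma$ alters the scaling through the cross term $\upsilon(g,g_1)$ and the variance $\upsilon(g_1,g_1)$, so the key effort is to evaluate these contour integrals correctly and to verify precisely which parts of the centering and scaling are left invariant by the replacement of $\gamma$ with $\widehat\gamma$.
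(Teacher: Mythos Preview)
The paper offers no proof of this corollary: it is stated immediately after Theorem~\ref{CRSTth} as though one may simply substitute $\bS_0=\hat\gamma\bI_p$ and carry over the centering $pF^{q_n}(g)$, mean $\mu(g)$ and variance $\upsilon(g)$ from~(\ref{limitRST})--(\ref{varRST}) unchanged. Your route is genuinely different and more careful: you recognize that $\hat\gamma$ is random, rewrite $\frac{2}{n}\widetilde{\mb{RST}}(\chi,\gamma\bI_p)=p(m_2/m_1^2-1)$ as a scale-free functional of the spectral moments of the properly standardized matrix, and linearize by the delta method before invoking the joint CLT of Lemma~\ref{CLT} for $(g_1,g)$. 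Your expansion $\frac{2}{n}\widetilde{\mb{RST}}-pF^{q_n}(g)=G_n(g)-2q_nG_n(g_1)+o_p(1)$ and the observation $\mu(g_1)=0$ are both correct.

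What your analysis actually uncovers is a discrepancy with the paper's literal claim. The centering is indeed unchanged, but the asymptotic variance of $G_n(g)-2qG_n(g_1)$ is $\upsilon(g)-4q\,\upsilon(g,g_1)+4q^2\upsilon(g_1,g_1)$, and using the covariances already computed in Appendix~\ref{A3} (namely $\upsilon_1(g_1,g_1)=\kappa q$, $\upsilon_1(g,g_1)=2\kappa q^2$) together with the factorized $\beta$-term from~(\ref{04var2}) one obtains $2\kappa q^2+16\beta q^3$, which differs from $\upsilon(g)=2\kappa q^2(1+2q)+4\beta q^3$ even in the real Gaussian case $(\kappa=2,\beta=0)$. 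In other words, the paper's implicit ``plug-in'' argument treats $\hat\gamma$ as a constant and thereby misses the $-2q_nG_n(g_1)$ contribution; your proof strategy is sound, but what it establishes is a corrected version of the corollary in which the standardizing variance is that of the linear spectral statistic $g-2q\,g_1$ rather than $\upsilon(g)$ itself.
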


 
\section{Simulation Study}\label{Sim}

Simulations  are  conducted in this section to evaluate the correction to Rao's score test (CRST)   that we proposed.
To compare the performance, we also  present the corresponding simulation results of the test in \cite{Chen} (SCT),  the test in  \cite{Cai} (TCT)
and the  classical Rao's score test in \cite{Rao} (RST).  We consider the  identity hypothesis test  $H_0 : \bS=\bI_p $, and generate $\mb{i.i.d}$ random samples $\chi=(\bx_1,\cdots,\bx_n)$ from a   $p$-dimensional  
random vector $\bX$ following 
two scenarios of  the populations   under the null hypothesis,
\begin{itemize}
\item  Gaussian Assumption:   random vector $\bX$ follows a $p$-dimensional  normal  distribution with mean $\mu_0\bone_p$ and covariance matrix $\bI_p$, where $\mu_0=2$ and $\bone_p$ denotes a vector with that all elements  are 1. 
\item  Gamma Assumption:    random vector $\bX=(X_1, \cdots, X_p)'$    and the components are independent and  identically  distributed as  Gamma (4,0.5),  so that   each of the random variables $X_i$ also has mean 2 and    variance 1.   
\end{itemize}

  For each set of the scenarios, we report both empirical Type I errors and powers with 10,000 replications at $\alpha=0.05$ significance  level. Different pair values of $p, n$ are selected at a wide rage regardless of  the functional expression  or   limiting behavior between them.  The mean parameter is supposed to be unknown and substituted by the sample mean during the calculations.

To calculate the empirical powers of the tests, two alternatives are designed %
 in the simulations. In the first alternative,  two different sample sets are  provided   for the corresponding scenarios.  For Gaussian assumption, the  samples are independently generated  from  the random vector $\bX$ following the normal distribution with mean  vector $\mu_0\bone_p$ and covariance matrix $\bS=\mb{diag} ( 2\cdot\bone_{[v_0p]},\bone_{p-[v_0p]})$, where $\mu_0=2$,  $v_0=0.02$ are varying constants and $[ \cdot  ]$ denotes the integer truncation function. 
 For Gamma assumption,  the  samples are still randomly selected   from  the random vector  $\bX=(X_1, \cdots, X_p)'$   with independent  components. Each component of the  front  part $(X_1,\cdots, X_{[ v_0p ]})$ is distributed as Gamma(2,1),  whereas each of the components in the rest part  $(X_{[ v_0p ]+1},\cdots, X_{[ p-v_0p ]})$ follows  Gamma(4,0.5), where $v_0=0.04$. In the second alternative,    the  samples for Gaussian assumption are independently drawn  from  the normal distribution with mean  vector $\mu_0\bone_p$ and covariance matrix $\bS=\mb{diag} ( (1+20/\sqrt{np})\cdot\bone_{[v_0p]},\bone_{p-[v_0p]})$, where $\mu_0=2$,  $v_0=0.25$.   The  samples for Gamma assumption are followed   the distribution of  the random vector  $\bX=(X_1, \cdots, X_p)'$,  which satisfied that  the  components  are independent and each component in the  front  part $(X_1,\cdots, X_{[ v_0p ]})$ is distributed as Gamma($\frac{4}{1+20/\sqrt{np}},\frac{1+20/\sqrt{np}}{2}$),  whereas each  component of the rest part  $(X_{[ v_0p ]+1},\cdots, X_{[ p-v_0p ]})$ follows  Gamma(4,0.5), where $v_0=0.25$.
 
 Simulation results of empirical Type I errors and powers for the first alternative  are listed in the Table  \ref{tab:1}, and the empirical  powers for the second alternative is represented in
 Table \ref{table2}.

 \begin{table}[htbp]
  \centering\caption{ Empirical sizes and powers(in brackets)  of the comparative tests for  $H_0 : \bS=\bI_p$  at$~\alpha=0.05$ significance level  for normal and gamma random vectors with 10,000 \\    replications.  The alternative hypothesis is    $\bS=\mb{diag} ( 2\cdot\bone_{[v_0p]},\bone_{p-[v_0p]})$ with \\$v_0 =$ $ 0.02$ for Normal variables and $v_0=0.04$ for Gamma variables.\label{tab:1} }
  \begin{tabularx}{12cm}{rXXXX}   
\toprule
 &  CRST                     &SCT                      &TCT                 &RST                              \\[-0.75 ex]
$p$&proposed &\multicolumn{2}{c}{$n$=19}&\\[-0.75 ex]
    \midrule
    &\multicolumn{4}{c}{\text{Normal random vectors:  Type I error~~(Power)}}\\[-0.75 ex]
17     &0.0732~~(0.2785)           &0.0748~~(0.2034)~~  &0.1057~~(0.2497)              &0.0973~~(0.3285)~~   \\[-0.75 ex]
20     &0.0715~~(0.2511)           &0.0765~~(0.1821)~~  &0.1216~~(0.2494)               &0.1018~~(0.3120)~~   \\[-0.75 ex]
40     &0.0628~~(0.1638)           &0.0752~~(0.1246)~~  &0.2540~~(0.3425)               &0.1129~~(0.2487)~~   \\[-0.75 ex]
80     &0.0586~~(0.1148)           &0.0720~~(0.0951)~~  &0.6664~~(0.7115)               &0.1070~~(0.1893)~~   \\[-0.75 ex]
160   &0.0566~~(0.1955)           &0.0718~~(0.1078)~~  &0.9985~~(0.9990)               &0.0809~~(0.2523)~~    \\[-0.75 ex]
320   &0.0573~~(0.2791)           &0.0726~~(0.1046)~~  &1.0000~~(1.0000)               &0.0419~~(0.2434)~~   \\[-0.75 ex]
    &\multicolumn{4}{c}{\text{Gamma random vectors:  Type I error ~~(Power) }}\\[-0.75 ex]
17     &0.0973~~(0.0986)            &0.0942~~(0.0895)~~  &0.1293~~(0.1324)              &0.2189~~(0.2173)~~      \\[-0.75 ex]
20     &0.0930~~(0.0943)            &0.0913~~(0.0887)~~  &0.1450~~(0.1465)               &0.2234~~(0.2255)~~   \\[-0.75 ex]
40     &0.0774~~(0.1611)            &0.0788~~(0.1316)~~  &0.2794~~(0.3584)               &0.2282~~(0.3632)~~   \\[-0.75 ex]
80     &0.0651~~(0.2532)            &0.0745~~(0.1578)~~  &0.6858~~(0.7897)               &0.2098~~(0.4960)~~   \\[-0.75 ex]
160   &0.0585~~(0.3275)            &0.0734~~(0.1552)~~  &0.9988~~(0.9995)               &0.1667~~(0.5511)~~   \\[-0.75 ex]
320   &0.0546~~(0.4719)            &0.0714~~(0.1497)~~  &1.0000~~(1.0000)               &0.1084~~(0.6082)~~   \\[-0.75 ex]
  \bottomrule
  \end{tabularx}
  
   \begin{tabularx}{12cm}{rXXXX}
$p$&\multicolumn{4}{c}{$n$=39}\\[-0.75 ex]
    \midrule
    &\multicolumn{4}{c}{\text{Normal random vectors:  Type I error ~~(Power) }}\\[-0.75 ex]
20     &0.0687~~(0.3986)~~         &0.0701~~(0.2904)~~  &0.0828~~(0.3168)~~  &0.0892~~(0.4451)~~   \\[-0.75 ex]
37     &0.0633~~(0.2558)~~         &0.0638~~(0.1771)~~  &0.1179~~(0.2629)~~  &0.1043~~(0.3451)~~   \\[-0.75 ex]
40     &0.0614~~(0.2448)~~         &0.0637~~(0.1680)~~  &0.1246~~(0.2644)~~  &0.1046~~(0.3370)~~  \\[-0.75 ex]
80     &0.0573~~(0.1529)~~         &0.0622~~(0.1061)~~  &0.2591~~(0.3465)~~  &0.1151~~(0.2581)~~  \\[-0.75 ex]
160   &0.0549~~(0.3022)             &0.0603~~(0.1279)~~  &0.6532~~(0.7761)~~  &0.1091~~(0.4420)~~  \\[-0.75 ex]
320   &0.0530~~(0.4508)             &0.0614~~(0.1246)~~   &0.9955~~(0.9985)    &0.0832~~(0.5412)~~  \\[-0.75 ex]
    &\multicolumn{4}{c}{\text{Gamma random vectors:  Type I error ~~(Power) }}\\[-0.75 ex]
20     &0.0937~~(0.0945)~~            &0.0891~~(0.0901)~~  &0.1013~~(0.1003)~~  &0.2253~~(0.2316)~~  \\[-0.75 ex]
37     &0.0789~~(0.2410)~~            &0.0741~~(0.1902)~~  &0.1339~~(0.2830)~~  &0.2398~~(0.4734)~~  \\[-0.75 ex]
40     &0.0761~~(0.2246)~~            &0.0729~~(0.1707)~~  &0.1415~~(0.2656)~~  &0.2414~~(0.4649)~~  \\[-0.75 ex]
80     &0.0668~~(0.3958)~~            &0.0664~~(0.2297)~~  &0.2748~~(0.5244)~~  &0.2441~~(0.6826)~~  \\[-0.75 ex]
160   &0.0581~~(0.5241)                &0.0629~~(0.2250)~~  &0.6615~~(0.8553)~~  &0.2189~~(0.7913)~~  \\[-0.75 ex]
320   &0.0540~~(0.7301)                &0.0611~~(0.2220)~~   &0.9954~~(0.9996)    &0.1764~~(0.8957)~~  \\[-0.75 ex]
  \bottomrule
  \end{tabularx}
\end{table}

{
\renewcommand{\thetable}{}
\renewcommand\tablename{Table~1 }  
\begin{table}[htbp]
  \centerline{T{\tiny ABLE}~1~~(cont.)}
  \begin{tabularx}{12cm}{rXXXX}   
\toprule
  &   CRST                                        &SCT                      &TCT                 &RST                        \\[-0.75 ex]
  $p$&proposed &\multicolumn{2}{c}{$n$=79}&\\[-0.75 ex]
    \midrule
    &\multicolumn{4}{c}{\text{Normal random vectors:  Type I error ~~(Power)} }\\[-0.75 ex]
20      &0.0640~~(0.6563)~~          &0.0645~~(0.5322)~~  &0.0690~~(0.5353)~~  &0.0759~~(0.6743)~~   \\[-0.75 ex]
40      &0.0591~~(0.4096)~~          &0.0594~~(0.2874)~~  &0.0832~~(0.3362)~~  &0.0885~~(0.4841)~~   \\[-0.75 ex]
77      &0.0572~~(0.2401)~~          &0.0568~~(0.1556)~~  &0.1226~~(0.2659)~~  &0.1064~~(0.3529)~~   \\[-0.75 ex]
80      &0.0568~~(0.2328)~~          &0.0577~~(0.1487)~~  &0.1258~~(0.2620)~~  &0.1072~~(0.3458)~~   \\[-0.75 ex]
160    &0.0540~~(0.5045)~~          &0.0554~~(0.2068)~~  &0.2573~~(0.5345)~~  &0.1164~~(0.6625)~~   \\[-0.75 ex]
320    &0.0511~~(0.7117)~~          &0.0564~~(0.2036)~~  &0.6458~~(0.8616)~~  &0.1097~~(0.8301)~~   \\[-0.75 ex]
    &\multicolumn{4}{c}{\text{Gamma random vectors:  Type I error ~~(Power) }}\\[-0.75 ex]
20      &0.0898~~(0.0919)~~          &0.0832~~(0.0816)~~  &0.0861~~(0.0839)~~  &0.2175~~(0.2156)~~   \\[-0.75 ex]
40      &0.0748~~(0.3623)~~          &0.0698~~(0.2946)~~  &0.0959~~(0.3415)~~  &0.2384~~(0.6094)~~   \\[-0.75 ex]
77      &0.0642~~(0.6529)~~          &0.0618~~(0.4423)~~  &0.1303~~(0.5758)~~  &0.2487~~(0.8698)~~  \\[-0.75 ex]
80      &0.0633~~(0.6473)~~          &0.0615~~(0.4292)~~  &0.1337~~(0.5821)~~  &0.2494~~(0.8656)~~   \\[-0.75 ex]
160    &0.0579~~(0.8028)~~          &0.0579~~(0.4377)~~  &0.2668~~(0.7503)~~  &0.2487~~(0.9543)~~   \\[-0.75 ex]
320    &0.0542~~(0.9493)~~          &0.0551~~(0.4533)~~  &0.6524~~(0.9525)~~  &0.2231~~(0.9924)~~   \\[-0.75 ex]
  \bottomrule
  \end{tabularx}
  
   \begin{tabularx}{12cm}{rXXXX}   
$p$& \multicolumn{4}{c}{$n$=159}\\[-0.75 ex]
    \midrule
    &\multicolumn{4}{c}{\text{Normal random vectors: Type I error~~(Power)}}\\[-0.75 ex]
20     &0.0685~~(0.9178)~~           &0.0682~~(0.8598)~~  &0.0693~~(0.8657)~~  &0.0713~~(0.9178)~~   \\[-0.75 ex]
40     &0.0526~~(0.7159)~~           &0.0531~~(0.5891)~~  &0.0616~~(0.6092)~~  &0.0629~~(0.7563)~~  \\[-0.75 ex]
80     &0.0539~~(0.4325)~~           &0.0586~~(0.2836)~~  &0.0839~~(0.3610)~~  &0.0874~~(0.5238)~~   \\[-0.75 ex]
157   &0.0558~~(0.7986)~~           &0.0563~~(0.4415)~~  &0.1198~~(0.5954)~~  &0.1032~~(0.8869)~~   \\[-0.75 ex]
160   &0.0565~~(0.8051)~~           &0.0541~~(0.4443)~~  &0.1260~~(0.5989)~~  &0.1098~~(0.8902)~~   \\[-0.75 ex]
320   &0.0505~~(0.9530)~~           &0.0522~~(0.4269)~~  &0.2545~~(0.7714)~~  &0.1109~~(0.9736)~~   \\[-0.75 ex]
    &\multicolumn{4}{c}{\text{Gamma random vectors: Type I error~~(Power)}}\\[-0.75 ex]
20     &0.0762~~(0.0889)~~          &0.0760~~(0.0745)~~  &0.0766~~(0.0747)~~  &0.1983~~(0.2212)~~   \\[-0.75 ex]
40     &0.0647~~(0.6090)~~          &0.0595~~(0.5397)~~  &0.0728~~(0.5521)~~  &0.2201~~(0.7831)~~   \\[-0.75 ex]
80     &0.0586~~(0.9145)~~          &0.0682~~(0.7869)~~  &0.0997~~(0.8267)~~  &0.2493~~(0.9821)~~   \\[-0.75 ex]
157   &0.0575~~(0.9783)~~          &0.0594~~(0.8230)~~  &0.1175~~(0.8963)~~  &0.2502~~(0.9990)~~   \\[-0.75 ex]
160   &0.0565~~(0.9785)~~          &0.0576~~(0.8243)~~  &0.1362~~(0.9124)~~  &0.2475~~(0.9997)~~   \\[-0.75 ex]
320   &0.0536~~(0.9998)~~          &0.0528~~(0.8566)~~  &0.2523~~(0.9716)~~  &0.2673~~(1.0000)~~   \\[-0.75 ex]
  \bottomrule
  \end{tabularx}
  \end{table}
} 

 \begin{table}[htbp]
  \centering\caption{ Empirical powers of the comparative tests for  $H_0 : \bS=\bI_p$  at$~\alpha=0.05$ significance level  for normal and gamma random vectors with 10,000    replications.  The alternative hypothesis is    $\bS=\mb{diag} ( (1+20/\sqrt{np})\cdot\bone_{[v_0p]},\bone_{p-[v_0p]})$ with $v_0 = 0.25$. \label{table2} }
  \begin{tabularx}{12cm}{rXXXXrXXXX}   
\toprule
 &    CRST                                     &SCT                      &TCT                 &RST          &     &      CRST                                       &SCT                      &TCT                 &RST                  \\[-0.75 ex]
 $p$ &proposed &\multicolumn{3}{c}{  ~}&$p$  &proposed &\multicolumn{3}{c}{~}\\[-0.75 ex]
     \midrule
       &\multicolumn{9}{c}{\text{Normal random vectors}}\\[-0.75 ex]
  &\multicolumn{4}{c}{$n$=19}&   &\multicolumn{4}{c}{$n$=39}\\[-0.75 ex]
17     &0.9095     &0.6115     &0.6651    &0.9288        &20     &0.9395     &0.6372     &0.6541    &0.9501   \\[-0.75 ex]
20     &0.9250     &0.5862     &0.6648    &0.9440        &37     &0.9220     &0.3900     &0.4960    &0.9529    \\[-0.75 ex]
40     &0.9322     &0.3787     &0.6615    &0.9646        &40     &0.9389     &0.3768     &0.4949    &0.9639    \\[-0.75 ex]
80     &0.9413     &0.2446     &0.8689    &0.9684        &80     &0.9448     &0.2094     &0.5059    &0.9743    \\[-0.75 ex]
160   &0.9508     &0.1608     &0.9996    &0.9674        &160   &0.9526     &0.1327     &0.7916    &0.9776    \\[-0.75 ex]
320   &0.9545     &0.1159     &1.0000    &0.9421        &320   &0.9591     &0.0982     &0.9972    &0.9755    \\[-0.75 ex]
   &\multicolumn{4}{c}{$n$=79}&   &\multicolumn{4}{c}{$n$=159}\\[-0.75 ex]
20     &0.9356     &0.6734     &0.6645    &0.9389        &20       &0.9260     &0.6881     &0.6734   &0.9231    \\[-0.75 ex]
40     &0.9202     &0.3757     &0.4235    &0.9428        &40       &0.8794     &0.3689     &0.3745    &0.9102    \\[-0.75 ex]
77     &0.9206     &0.1954     &0.3115    &0.9579        &80       &0.8912     &0.1807     &0.2312    &0.9319    \\[-0.75 ex]
80     &0.9225     &0.1952     &0.3167    &0.9610        &157     &0.9081     &0.0876     &0.1976    &0.9490    \\[-0.75 ex]
160   &0.9426     &0.1250     &0.3991    &0.9756        &160     &0.9104     &0.1104     &0.2125    &0.9558    \\[-0.75 ex]
320   &0.9551     &0.0893     &0.7209    &0.9823        &320     &0.9348     &0.0767     &0.3372    &0.9783    \\[-0.75 ex]
  \bottomrule
  \end{tabularx}
  
   \begin{tabularx}{12cm}{rXXXXrXXXX}

    &\multicolumn{9}{c}{\text{Gamma random vectors }}\\[-0.75 ex]
      &\multicolumn{4}{c}{$n$=19}&  &\multicolumn{4}{c}{$n$=39}\\[-0.75 ex]
17     &0.7811     &0.5381     &0.6074    &0.8819        &20     &0.8491    &0.5829     &0.6067    &0.9352    \\[-0.75 ex]
20     &0.8000     &0.5184     &0.6220    &0.9039        &37     &0.8117     &0.3854     &0.4921    &0.9363    \\[-0.75 ex]
40     &0.8104     &0.3781     &0.6746    &0.9330        &40     &0.8311     &0.3777     &0.4969    &0.9469    \\[-0.75 ex]
80     &0.8108     &0.2343     &0.8768    &0.9388        &80     &0.8153     &0.2146     &0.5179    &0.9534    \\[-0.75 ex]
160   &0.8191     &0.1622     &0.9998    &0.9319        &160     &0.8225     &0.1345     &0.7969    &0.9577    \\[-0.75 ex]
320   &0.8159     &0.1232     &1.0000    &0.8961        &320     &0.8270     &0.1001     &0.9977    &0.9457    \\[-0.75 ex]
   &\multicolumn{4}{c}{$n$=79}&   &\multicolumn{4}{c}{$n$=159}\\[-0.75 ex]
20     &0.8652     &0.6203     &0.6196    &0.9386        &20     &0.8769     &0.6548      &0.6437    &0.9543    \\[-0.75 ex]
40     &0.8275     &0.3694     &0.4176    &0.9430        &40     &0.8104     &0.3545      &0.3653    &0.9434    \\[-0.75 ex]
77     &0.8029     &0.1981     &0.3250    &0.9529        &80     &0.8096     &0.2211       &0.2765    &0.9650    \\[-0.75 ex]
80     &0.8139     &0.1959     &0.3243    &0.9562        &157     &0.7776     &0.1109     &0.2107    &0.9610    \\[-0.75 ex]
160   &0.8170     &0.1259     &0.4034    &0.9638        &160     &0.8102     &0.1312     &0.2322    &0.9546    \\[-0.75 ex]
320   &0.8256     &0.0896     &0.7243    &0.9630        &320     &0.8296     &0.1095     &0.3540    &0.9837    \\ [-0.75 ex]
 \bottomrule
  \end{tabularx}
\end{table}

As seen from the Table \ref{tab:1},  the empirical Type I errors of our proposed test CRST for both scenarios
 are almost around the nominal size 5\%,  and it converges   to  the  nominal level rapidly as the dimension $p$
 approaches   infinity,  even for small $n$.  Although,  the empirical sizes of
  the proposed CRST is slightly higher  for the case of $p=17$ or 20 under the Gamma assumption, it  can be accepted with comparison to the other tests and be understood due to both asymptotic and nonparametric.

For  a further comparison, it is limited to several  aspects. 
 First, the Rao's score test and our proposed test both give  a good performance when  $p$  is very small   under the normal   assumption.   However, the empirical  sizes of the Rao's score test deviate from the nominal level as $p$ increases, and it shows a even worse result  under the Gamma distribution assumption,  where the proposed CRST is  still active. 
Another interesting note is that the Rao's score test  has a resilient power for the normal cases when $p$ is much higher than $n$, for example ($p=160, n=19$).

Second, for  small and moderate dimensions like $p=20$  or
40 with higher sample size  $n=79$ or 159,  the empirical Type I errors
 of  the TCT in  \cite{Cai} behave well. 
However, the TCT  leads to a dramatically high empirical size as the dimension $p$ increases much higher,   especially for "large $p$ , small $n$" such as (p=160,n=19), though it has the optimal powers.  Meanwhile, the proposed  CRST remains accurate.
 
Last, compared to the SCT in  \cite{Chen}, our proposed CRST have more closer empirical sizes to 5\% with growing dimension  $p$, especially  for the small sample sizes. Furthermore, the powers of  proposed CRST uniformly dominates that of the SCT 
over the entire range.  For example, the powers of  proposed CRST rise rapidly up to 1 as $p$ increases in the case of $n=79$ under the Gamma assumption, while those of  the SCT remains less than 0.5 even if the sample size is not quite small.

Finally, It must be pointed out that the proposed CRST cannot be use for the case $q_n=1$, but it remains in force even if  $q=1$, which means the $q_n$ could be very close to 1 by two sides. So we choose a different $p$ for each $n$, which makes $q_n \rightarrow 1^{-}$, for example ($p$=17,$n$=19) or  ($p$=77,$n$=79). Also, the cases as  ($p$=20,$n$=19) or  ($p$=80,$n$=79) are chosen for $q_n \rightarrow 1^{+}$. As seen from the results, the  proposed CRST performs well even if  
 $q_n \rightarrow 1$.
 
 Table \ref{table2} shows a more apparent comparison advantage under the second alternative.  The higher empirical powers  of RST  and SCT don't  make sense  because their empirical sizes are much higher. Moreover,
 the powers of the TCT  decline sharply, even near to 0.1, when the dimension $p$ rises up. Whereas, the proposed CRST gives the powers around 0.9 at the eligible empirical sizes.  
 
For a more intuitive understanding, take the cases ($n$=39,$p$=80) and ($n$=39,$p$=320) as  an example,  Figure \ref{fig:1} portrays a  dynamic view of  the powers for the first alternative  under the  Gamma assumption by the  varying  parameter $v_0$   from 0 to 0.10.  
Figure \ref{fig:2}  describes  the powers for the second alternative under the Gamma assumption by the  varying  parameter $v_0$   from 0 to 0.5.  Because $v_0$  depicts  the distance between the null and alternative hypothesis, so the starting point at $v_0=0$ is for the empirical sizes.  As shown in the picture, the proposed CRST  is a more sensitive  and powerful test with the accurate  empirical sizes.

\begin{figure}[htbp]
\begin{center}
\includegraphics[width = .45\textwidth]{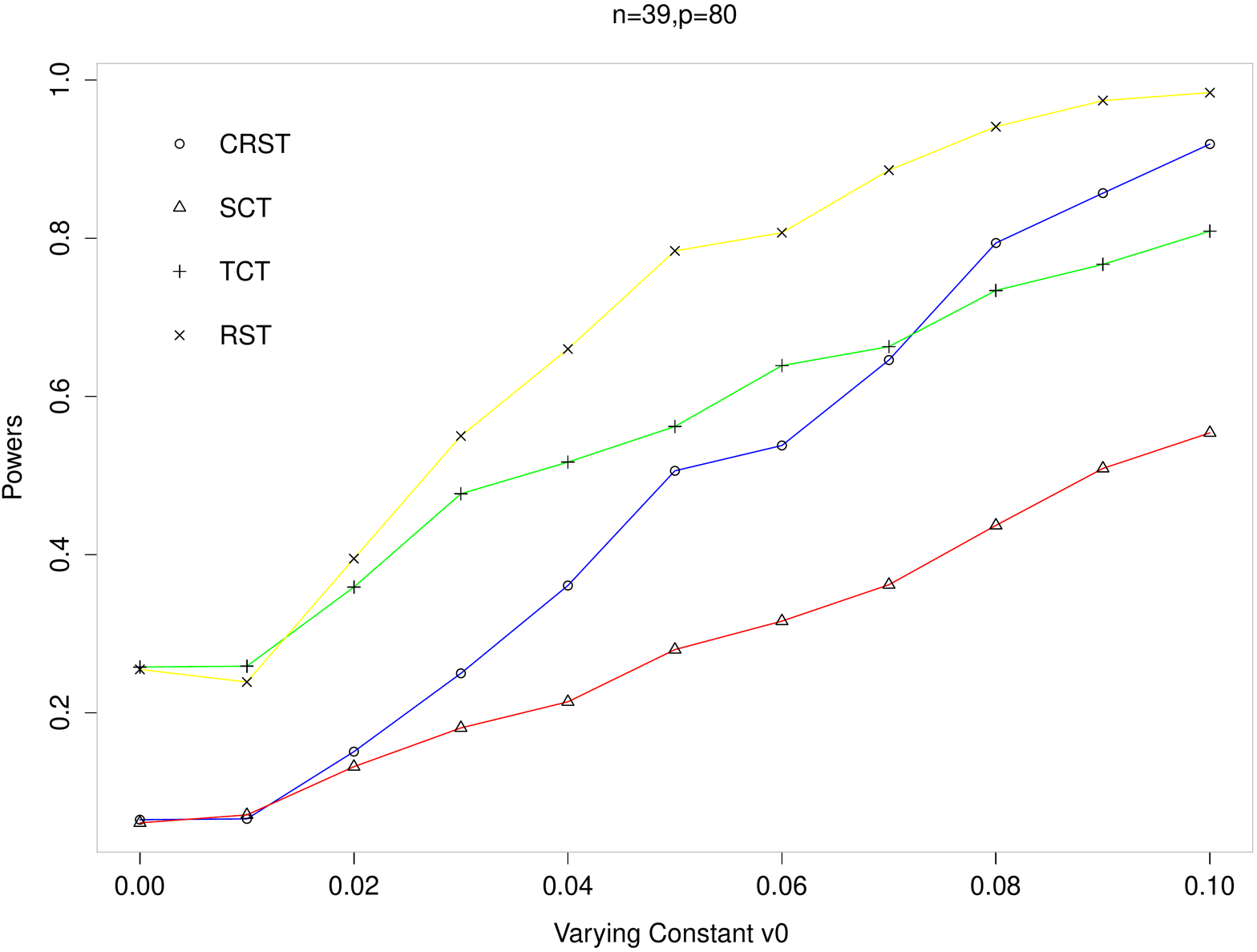}\quad \includegraphics[width = .45\textwidth] {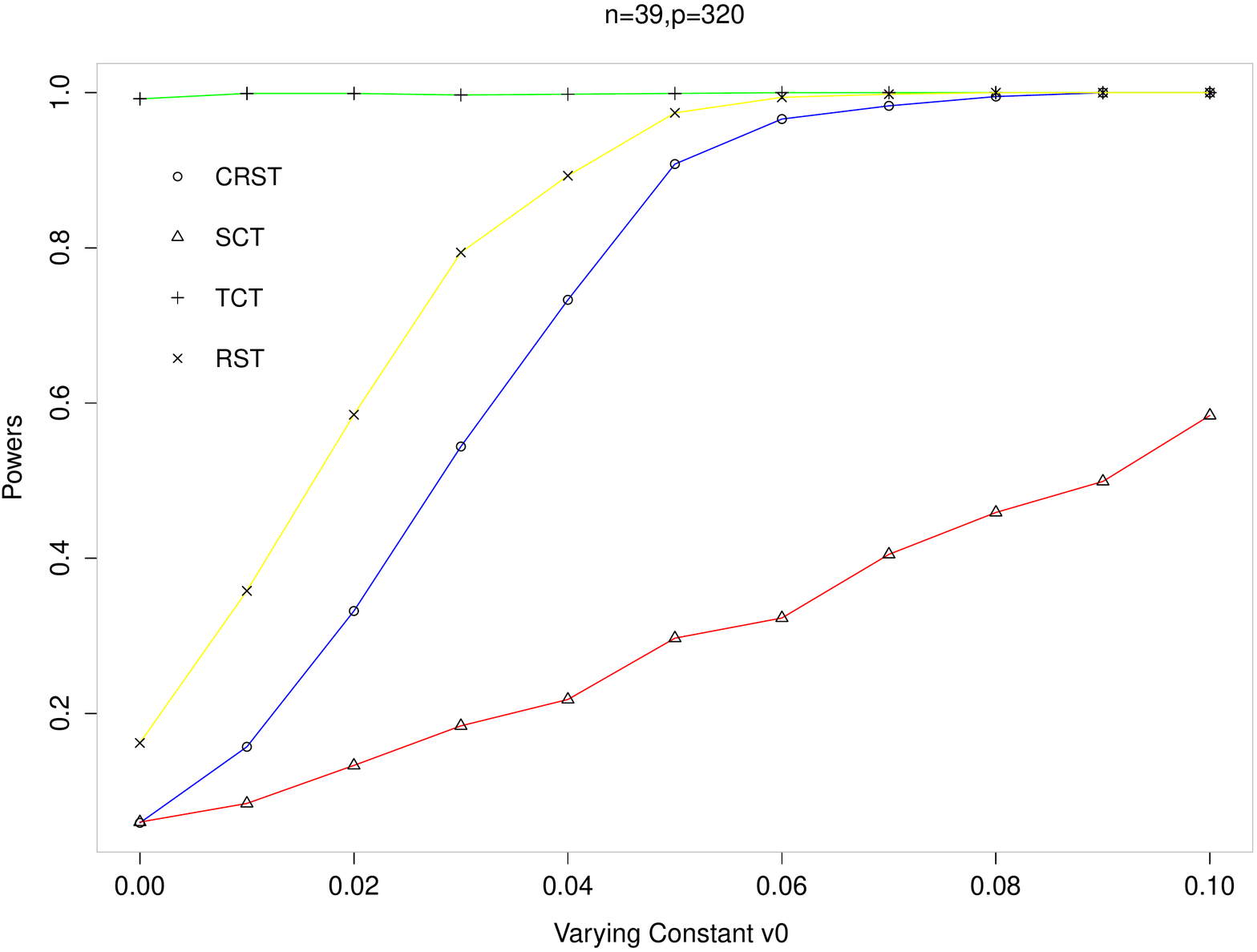}
\caption{Empirical sizes  and powers  of the comparative tests for $H_0 : \bS=\bI_p $  at  $\alpha=0.05$  significance level based on 10,000 independent replications of Gamma Assumption.  The null and alternative hypothesis are   $\bS=\mb{diag} (2\cdot\bone_{[v_0p]},\bone_{p-[v_0p]})$ with $v_0$ varied from 0 to 0.10. Left: $n=39,p=80$; Right: $n=39,p=320$.}
\label{fig:1}
\end{center} 
\end{figure}

\begin{figure}[htbp]
\begin{center}
\includegraphics[width = .45\textwidth]{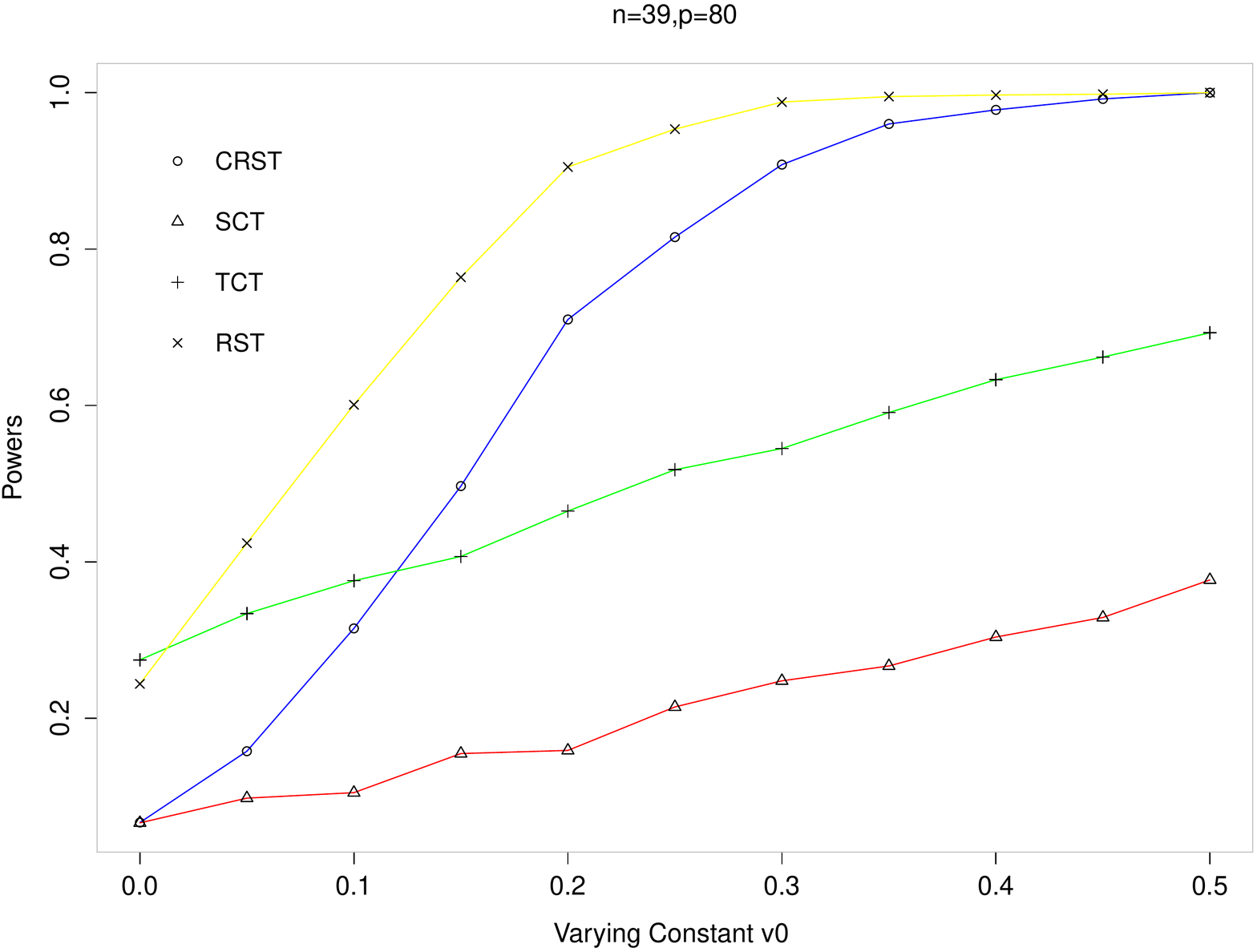}\quad \includegraphics[width = .45\textwidth] {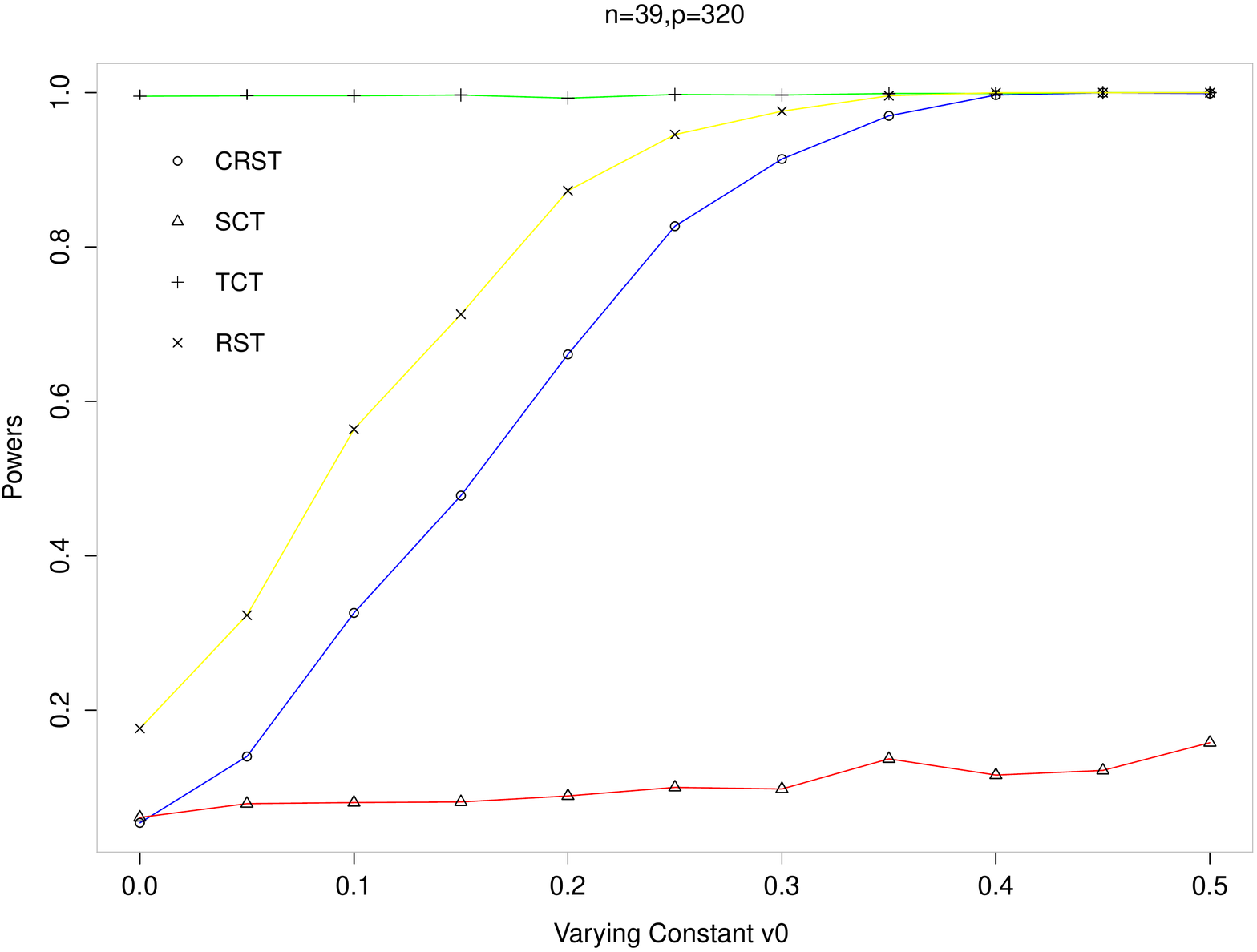}
\caption{Empirical sizes  and powers  of the comparative tests for $H_0 : \bS=\bI_p $  at  $\alpha=0.05$  significance level based on 10,000 independent replications of Gamma Assumption.  The null and alternative hypothesis are   $\bS=\mb{diag} ((1+20/\sqrt{np})\cdot\bone_{[v_0p]},\bone_{p-[v_0p]})$ with $v_0$ varied from 0 to 0.50. Left: $n=39,p=80$; Right: $n=39,p=320$.}
\label{fig:2}
\end{center} 
\end{figure}


\section{Conclusion}\label{Con}

In this paper, we propose a  new testing statistic for the large dimensional covariance structure test  based on amending Rao's score tests by RMT. Through  generalizing  the  CLT for LSS of  a large dimensional  sample covariance matrix in \cite{BS04}, we guarantee  the test proposed is feasible for the non-Gaussian variables in a wider range. Furthermore, the correction  to  Rao's score test can be also used in the case of ultra high dimensionality regardless of  the functional relationship between $p$ and $n$. 
It breaks the inherent thinking  that the corrections  by RMT are usually practicable when $p<n$,
and shows that it is  the corrected statistics we chose to decide whether the corrections by RMT can be used  in the case of $p>n$  rather than the tools we used in RMT.  So we believe that
 large dimensional  spectral analysis in RMT  will have  more application fields in light of different situations.


\appendix

\section{Derivations and Proofs.}\label{app}
\subsection{ Proofs of the derivation in (\ref{scorevec}) and (\ref{H22d}) \label{A1}} 

The  logarithm  of the density of the sample $\chi$ is written  as  
\begin{eqnarray*}
l(\chi,\btea)&=&-\frac{np}{2}\mb{ln}(2\pi)-\frac{n}{2}\ln|\bS|-\frac{1}{2}\sum\limits_{i=1}^{n}(\bx_i-\bmu)'\bS^{-1}(\bx_i-\bmu)\\
&=&-\frac{np}{2}\mb{ln}(2\pi)-\frac{n}{2}\ln|\bS|-\frac{1}{2}\sum\limits_{i=1}^{n}\mb{tr}\left(\bS^{-1}(\bx_i-\bmu)(\bx_i-\bmu)'\right),
\end{eqnarray*}
and  $\btea$ is denoted as $(\bmu', \mb{vec}(\bS)')' . $

For the first part of  (\ref{scorevec}),
by  the  formula $\displaystyle\frac{\md X'BX}{\md X}=(B+B')X$, where the $X$ is a vector and $B$ is a  matrix dependent on $X$,
 we have 
\begin{eqnarray*}
\frac{\md l(\chi,\btea)}{\md \bmu}&=&-\frac{1}{2}\sum\limits_{i=1}^{n} \frac{\md (\bx_i-\bmu)'\bS^{-1}(\bx_i-\bmu) }{\md \bmu}\\
&=&-\frac{1}{2}\sum\limits_{i=1}^{n} -2 \bS^{-1}(\bx_i-\bmu)\\
&=&n\bS^{-1}(\hat\bmu-\bmu)
\end{eqnarray*}
where $\hat\bmu=\displaystyle\frac{1}{n}\sum\limits_{i=1}^n\bx_i$.

For the second part of  (\ref{scorevec}),  by the following formulas 
\begin{eqnarray*}
&\frac{\md  \ln |X|}{\md X}=\mb{vec}((X^{-1})') \quad &    \frac{\md  X^{-1}}{\md X}=-((X^{-1})'\otimes X^{-1}) \\
&\frac{\md  \mb{tr} (B'X)}{\md X}=\mb{vec}(B) \quad &     (B' \otimes C) \mb{vec}(D)=\mb{vec}(CDB) 
\end{eqnarray*}
where $X,  B, C,D$ are all matrices. Then
we have 
\begin{eqnarray*}
\frac{\md l(\chi,\btea)}{\md \bS}&=&-\frac{n}{2}\frac{\md  \ln |\bS|}{\md \bS}-\frac{1}{2} \frac{\md \bS^{-1}}{\md \bS}\cdot\frac{\md  \mb{tr}\left(\bS^{-1}\sum\limits_{i=1}^{n} (\bx_i-\bmu)(\bx_i-\bmu)'\right)}{\md \bS^{-1}} \\
&=&-\frac{n}{2}\mb{vec}(\bS^{-1})+\frac{n}{2}(\bS^{-1}\otimes\bS^{-1})\mb{vec}(\bA) \\
&=&\frac{n}{2}\mb{vec}(\bS^{-1}(\bA\bS^{-1}-\bI_p))
\end{eqnarray*}
where $\bA=\displaystyle\frac{1}{n}\sum\limits_{i=1}^{n}(\bx_i-\bmu)(\bx_i-\bmu)'$. 
Thus 
\[
\frac{\md l(\chi,\btea)}{\md \mb{vec}(\bS)}= \mb{vec}\left(\frac{\md l(\chi,\btea)}{\md \bS}\right)=\frac{n}{2}\mb{vec}(\bS^{-1}(\bA\bS^{-1}-\bI_p)).\]
Therefore,  the score vector  for the sample is 
\[
U(\chi, \btea)=:\left(
\begin{array}{c}
U_1 (\chi, \btea) \\
U_2(\chi, \btea)
\end{array}
\right)=\left(
\begin{array}{c}
n\bS^{-1}(\hat\bmu-\bmu)  \\
\displaystyle\frac{n}{2}\mb{vec}(\bS^{-1}(\bA\bS^{-1}-\bI_p))
\end{array}
\right)
\]

Next consider the derivation  of  (\ref{H22d}).
By the definitions of the Hessian matrix  and score vector, we have 
\[H(\chi, \btea)=\frac{\md^2}{\md \btea^2}l(\chi,\btea)=\frac{\md U(\chi,\btea)}{\md \btea'}=: 
\left(
\begin{array}{ccc}
 H_{11} & H_{12}    \\
 H_{21} & H_{22}   
\end{array}
\right),
\] where the part of the parameter $\bS$ is
\begin{eqnarray*}
H_{22}&=&\frac{n}{2}\frac{\md \mb{vec}(\bS^{-1}(\bA\bS^{-1}-\bI_p))}{\md \mb{vec}'(\bS)}\\ \nonumber 
&=&\frac{n}{2}\frac{\md \mb{vec}(\bS^{-1})}{\md \mb{vec}'(\bS)} \frac{\md \mb{vec}(\bS^{-1}\bA\bS^{-1}-\bS^{-1})}{\md  \mb{vec}'(\bS^{-1})}\\\nonumber
&=& -\frac{n}{2} (\bS^{-1} \otimes \bS^{-1} )(\bA\bS^{-1} \otimes \bI_p+\bI_p\otimes \bA\bS^{-1}-\bI_{p^2}).
\end{eqnarray*}
Since 
\begin{eqnarray*}
&&\frac{\md (\mb{vec}(\bS^{-1}\bA\bS^{-1})-\mb{vec}(\bS^{-1}))}{\md  \mb{vec}'(\bS^{-1})}
=\frac{\md ( \bA\bS^{-1}\otimes \bI_p)\mb{vec}(\bS^{-1})}{\md  \mb{vec}'(\bS^{-1})}- \frac{\md \mb{vec}(\bS^{-1})}{\md  \mb{vec}'(\bS^{-1})}\\
&&= \frac{\md \mb{vec}(\bS^{-1})}{\md  \mb{vec}'(\bS^{-1})}  ( \bA\bS^{-1}\otimes \bI_p)+\frac{\md ( \bA\bS^{-1}\otimes \bI_p)}{\md  \mb{vec}'(\bS^{-1}) } \mb{vec}(\bS^{-1})-\bI_{p^2}\\
&&=( \bA\bS^{-1}\otimes \bI_p)+\frac{\md \mb{vec}(\bA\bS^{-1})}{\md  \mb{vec}'(\bS^{-1})} (\bI_p \otimes \bS^{-1})-\bI_{p^2}\\
&&=( \bA\bS^{-1}\otimes \bI_p)+(\bI_p \otimes \bA)(\bI_p \otimes \bS^{-1})-\bI_{p^2}\\
&&= (\bA\bS^{-1} \otimes \bI_p+\bI_p\otimes \bA\bS^{-1}-\bI_{p^2}).
\end{eqnarray*}
and 
\begin{equation*}
\frac{\md \mb{vec}(\bS^{-1})}{\md \mb{vec}'(\bS)} =- (\bS^{-1}\otimes \bI_p)( \bI_p \otimes \bS^{-1} )=-(\bS^{-1} \otimes \bS^{-1} ),
\end{equation*}
where the formulas  
$  (B' \otimes C) \mb{vec}(D)=\mb{vec}(CDB)  $ and  $ (B\otimes C)  (D\otimes E) = (BD\otimes CE) $
are repeatedly used.

\subsection{  Proofs of  Lemma \ref{CLT}  \label{A2} }

First, the result of (\ref{04mean1}) and (\ref{04var1})  is corresponding to the ones in \cite{BS04}  with the 4th moment equal to 3.  Obviously,  the mean in  (\ref{04mean1}) is formed under the condition that  the matrix $T$ in  \cite{BS04} is identity, and its LSD is $H(t)=I_{[1,\infty)}(t)$ according to the assumptions in Lemma \ref{CLT}. Next, If we drop the condition on the 4th moment, it will be found that each of 
the  (4.10) and (2.7) in   \cite{BS04} should be  plused an additional  item by their (1.15)  
\[-\beta q b_p^2(z) \cdot \mb{E} \left( e'_1T^{\frac1{2}}D^{-1}T^{\frac1{2}}e_1 \cdot  e'_1T^{\frac1{2}}D^{-1}(\underline{m}(z)T+\bI)^{-1}T^{\frac1{2}}e_1\right)\]
and
\[\frac{\beta b_p(z_1)b_p(z_2)}{n^2}\sum\limits_{j=1}^n\sum\limits_{i=1}^p e'_i T^{\frac1{2}} \mb{E}_j (D_j^{-1}(z_1)) T^{\frac1{2}} e_i 
\cdot   e'_i T^{\frac1{2}} \mb{E}_j (D_j^{-1}(z_2)) T^{\frac1{2}} e_i \]
respectively, where
\begin{eqnarray*}
&&e_i=(\underbrace{0\cdots,0,}_{i-1}1,0,\cdots,0)' ; \quad
D(z)=T^{\frac1{2}}{ \mS_n }T^{\frac1{2}}-z\bI;  \\
&& D_j(z)=D(z)-r_jr_j^*; \quad r_j=\frac{1}{\sqrt{n}}T^{\frac1{2}}\bxi_{\cdot j}\quad
 b_p(z)=\frac{1}{1+n^{-1}\mb{E tr}T D_1^{-1}}
\end{eqnarray*}
and $\mb{E}_j$ is the conditional expectation given $r_1,\cdots, r_j$ for $j=1,\cdots, n$.

According to the Lemma 6.2 in   \cite{Zheng},  if the 4th moment is arbitrary  finite number, the mean function of $M(z)$ in Lemma 1.1 in   \cite{BS04} should be added 

\[
\frac{\beta q \underline{m}^3(z)\cdot \int \frac{t}{1+t\underline{m}(z)} \md H(t)\cdot \int \frac{1}{(1+t\underline{m}(z))^2} \md H(t)}
{1-q \int \frac{t^2\underline{m}^2(z)}{(1+t\underline{m}(z))^2} \md H(t)}
\]
which is the  limit of 
\[\frac{\beta q \underline{m}(z) b_p^2(z)\cdot \mb{E} \left( e'_1T^{\frac1{2}}D^{-1}T^{\frac1{2}}e_1 \cdot  e'_1T^{\frac1{2}}D^{-1}(\underline{m}(z)T+\bI)^{-1}T^{\frac1{2}}e_1\right)}
{1-q \int \frac{t^2\underline{m}^2(z)}{(1+t\underline{m}(z))^2} \md H(t)} \]
ever dropped in (4.10) and (4.12) of  \cite{BS04}. Similarly, the covariance function   of $M(z)$ should include the additional item 
\[\beta q \cdot \int \frac{t\underline{m}'(z_1)}{(1+t\underline{m}(z_1))^2} \md H(t)\cdot  \int \frac{t\underline{m}'(z_2)}{(1+t\underline{m}(z_2))^2} \md H(t)\]
which is the  limit of 
\[\frac{\partial^2}{\partial z_1 \partial z_2}  \left(\frac{\beta b_p(z_1)b_p(z_2)}{n^2}\sum\limits_{j=1}^n\sum\limits_{i=1}^p e'_i T^{\frac1{2}} \mb{E}_j (D_j^{-1}(z_1)) T^{\frac1{2}} e_i 
\cdot   e'_i T^{\frac1{2}} \mb{E}_j (D_j^{-1}(z_2)) T^{\frac1{2}} e_i \right)\]
ever dropped in (2.7) of   \cite{BS04}. 
Then  by their (1.14), the added mean function of  $G_n(f_j)$ should be
\begin{equation}
-\frac{\beta q }{2 \pi i} \oint f_j(z) \frac{ \underline{m}^3(z)\cdot \int \frac{t}{1+t\underline{m}(z)} \md H(t)\cdot \int \frac{1}{(1+t\underline{m}(z))^2} \md H(t)}
{1-q \int \frac{t^2\underline{m}^2(z)}{(1+t\underline{m}(z))^2} \md H(t)} \md z
\label{meanplus}
\end{equation}
and the covariance function   of $G_n(f_j)$  should plus
\begin{equation}
-\frac{\beta q}{4\pi^2}\oint\oint f_j(z_1)f_\ell(z_2)\int \frac{t\underline{m}'(z_1)}{(1+t\underline{m}(z_1))^2} \md H(t)\cdot  \int \frac{t\underline{m}'(z_2)}{(1+t\underline{m}(z_2))^2} \md H(t) ~\md z_1 \md z_2
\label{varplus}
\end{equation}

Put the condition $H(t)=I_{[1,\infty)}(t)$  assuming in Lemma \ref{CLT}   into the equation (\ref{meanplus}) and (\ref{varplus}), then we have 
the additional mean function
\[
-\frac{\beta q }{2 \pi i} \oint f_j(z)\frac{\underline{m}^3(z)}{(1+\underline{m}(z))[(1-q)\underline{m}^2(z)+2\underline{m}(z)+1]} \md z,
\]
and  the added covariance
function
\[
-\frac{\beta q}{4\pi^2}\oint\oint\frac{f_j(z_1)f_\ell(z_2)}{(1+\underline{m}(z_1))^2(1+\underline{m}(z_2))^2}
d\underline{m}(z_1)d\underline{m}(z_2), \]
where  $ j,\ell \in \{1, \cdots,
k\}$.

\subsection{ Proofs of limiting schemes for the correction to Rao's score test   \label{A3}}

\begin{itemize}

\item {\bf  Calculation of  $F^{q_n} (g)$ in (\ref{limitRST}).}

Because $F^{q_n}(g)=\int_{-\infty}^{\infty}g(x) \mb{d} F^{q_n}(x)$,  where $F^{q_n}(x)$  is the Mar\v{c}enko-Pastur law of  the matrix $\mS$ with index $q_n$,
the density  is 
\[
 p^{q_n}(x)=
\left\{
\begin{array}{cc}
   \displaystyle\frac{1}{2\pi x q_n}\sqrt{(b_n-x)(x-a_n)}, &  \text{if}   \quad a_n \leq x \leq b_n,    \\
 0, &  \text{otherwise},   
\end{array}
\right.
\]
and has a point mass $1-\displaystyle\frac{1}{q_n} $ at the origin  if $q_n>1$, 
where $a_n=(1-\sqrt{q_n})^2$ and $b_n=(1+\sqrt{q_n})^2$.
(See  in   \cite{BS10}).
According the the definition,
 the supporting set of MP-law is  $x \in [0,4]$ if $q_n=1$.  But  it is unreasonable that $x$ lies on the denominator  if $x=0$ by the expression of the density.
So  we exclude the case of $q_n=1$  and consider the following integral first,
\begin{equation*}
 F^{q_n}(g)=\int_{-\infty}^{\infty}\frac{(x-1)^2}{2\pi x q_n}\sqrt{(b_n-x)(x-a_n)}\md x
\end{equation*}
Make a substitution $x=1+q_n-2\sqrt{q_n}\cos \theta$,  where $0 \leq \theta \leq \pi $,  then 
\begin{eqnarray*}
  F^{q_n}(g)&=&\int^{b_n}_{a_n}\frac {(x-1)^2}{2\pi
    xq_n}\sqrt{(b_n-x)(x-a_n)} \md x\\
  &=&\frac{2}{\pi
    }\int_0^{\pi}\frac{(q_n-2\sqrt{q_n}\cos\theta)^2}
    {1+q_n-2\sqrt{q_n}\cos\theta}\sqrt{\sin^2\theta}\sin\theta \md\theta\\
  &=&\frac{1}{\pi
    }\int_0^{2\pi}\frac{(q_n-2\sqrt{q_n}\cos\theta)^2\sin^2\theta }
    {1+q_n-2\sqrt{q_n}\cos\theta} \md\theta\\
\end{eqnarray*}
Let $x=1+q_n-2\sqrt{q_n}\cos\theta=-2\sqrt{q_n}(\cos\theta+d_0)$, where $d_0=-\displaystyle\frac{1+q_n}{2\sqrt{q_n}}$ is a constant.
Thus, the above integral $F^{q_n} (g)$ is obtained by the partition into three  parts as below :
\begin{eqnarray*}
&&\frac{1}{\pi
    }\int_0^{2\pi}\frac{(q_n-2\sqrt{q_n}\cos\theta)^2\sin^2\theta }
    {1+q_n-2\sqrt{q_n}\cos\theta}\md\theta\\
  &&=  \frac{1}{\pi
    }\int_0^{2\pi}\frac{[-2\sqrt{q_n}(\cos\theta+d_0)-1]^2\sin^2\theta }
    {-2\sqrt{q_n}(\cos\theta+d_0)}\md\theta\\
    &&=\frac{1}{\pi
    }\int_0^{2\pi}\left[-2\sqrt{q_n}\cos\theta\sin^2\theta-(2\sqrt{q_n}d_0+2)\sin^2\theta-\frac{\sin^2\theta  }{2\sqrt{q_n}(\cos\theta+d_0)}\right]\md\theta\\
    &&=\frac{1}{\pi}\left[0+(q_n-1)\pi+\frac{\pi}{2q_n}(1+q_n-|1-q_n|)\right]\\
    &&=
\left\{
\begin{array}{cc}
  q_n, &~if~ q_n<1 ;  \\
  q_n-1+1/q_n,&  ~if~ q_n>1,  
\end{array}
\right.  
\end{eqnarray*}
where the third part is calculated by the  following  integral, which is also used in other calculations.
\begin{eqnarray}
&&\int_{0}^{2\pi}\frac{1}{\cos\theta+d_0}\md \theta=\int_{0}^{2\pi}\frac{1}{\cos^2\frac{\theta}{2}-\sin^2\frac{\theta}{2}+d_0}\md \theta\nonumber\\
&&=\int_{0}^{2\pi}\frac{2\md \frac{\theta}{2}}{(1-\tan^2\frac{\theta}{2}+d_0\sec^2\frac{\theta}{2})\cos^2\frac{\theta}{2}}\nonumber\\
&&=\int_{0}^{2\pi}\frac{2\md \tan\frac{\theta}{2}}{d_0+1+(d_0-1)\tan^2\frac{\theta}{2}}\nonumber\\
&&=\frac{2}{\sqrt{d_0^2-1}}\displaystyle\int_{0}^{2\pi}
\frac{1}{1+\left(\sqrt{\frac{d_0-1}{d_0+1}}\tan\frac{\theta}{2}\right)^2}
\md \left(\sqrt{\frac{d_0-1}{d_0+1}}\tan\frac{\theta}{2}\right)\nonumber\\
&&=\frac{-2\pi}{\sqrt{d_0^2-1}}\label{eqA1chucosd}
\end{eqnarray}
and the third part of the limiting integral  $F^{q_n}(g)$ is 
\begin{eqnarray}
&&\int_0^{2\pi}-\frac{\sin^2\theta}{2\sqrt{q_n}(\cos\theta+d_0)}\md\theta \nonumber\\
&=&-\frac{1}{2\sqrt{q_n}}\int_0^{2\pi}\frac{1-\cos^2\theta  }{\cos\theta+d_0}\md\theta\nonumber\\
&=&-\frac{1}{2\sqrt{q_n}}\int_0^{2\pi} (-\cos\theta+d_0+\frac{1-d_0^2}{\cos\theta+d_0})\md\theta\nonumber\\
&=&\frac{\pi}{2q_n}(1+q_n-|1-q_n|)\label{eqA1sin2chucosd}
\end{eqnarray}
Because  the density corresponding to $F^{q_n}(x)$  has a point mass $1-\displaystyle\frac{1}{q_n} $ at the origin  if $q_n>1$, then the $F^{q_n}(g)$
should be added the term $(1-0)^2\cdot (1-\displaystyle\frac{1}{q_n} ) $ 
if $q_n>1$.
Then we arrive at 
\[F^{q_n}(g)=q_n, \quad \text{if }  \quad  q_n<1 \quad \&  \quad q_n>1.\]

\item{ {\bf Calculation of  $\mu (g)$ in (\ref{meanRST}).}}

By  (9.12.13) in Bai and Silverstein \cite{BS10}, with $H(t)=\mb{I}_{[1,\infty}(t)$, the first part of the  limiting mean $\mu(g)$ in (\ref{04mean1}) can also be expressed as 
$$
\mu_1(g)=(\kappa-1) \cdot \left(\frac{g\left(a(q)\right)+g\left(b(q)\right)}{4} -
\frac{1}{2\pi}\int_{a(q)}^{b(q)}\frac{g(x)}{\sqrt{4q-(x-1-q)^2}}\md x\right)\\
$$
where $a(q)=(1-\sqrt{q})^2$ and  $b(q)=(1+\sqrt{q})^2$. For $g(x)= (x-1)^2$, make a substitution
$x=1+q-2\sqrt{q}\cos\theta,~ 0\leq \theta \leq \pi$, then
\begin{eqnarray*}
\mu_1(g)&=&(\kappa-1) \bigg(\frac{g\left(a(q)\right)+g\left(b(q)\right)}{4} \\
&& -\frac{1}{2\pi}\int_{0}^{\pi}
\frac{g(1+q-2\sqrt{q}\cos\theta)}{\sqrt{4q-(2\sqrt{q}\cos\theta)^2}}\cdot 2\sqrt{q}\sin\theta \md \theta\bigg)\\
&=&(\kappa-1)  \left(\frac{g\left(a(q)\right)+g\left(b(q)\right)}{4} -\frac{1}{4\pi}\int_{0}^{2\pi}
g(1+q-2\sqrt{q}\cos\theta)\md \theta\right)\\
&=&(\kappa-1)  \left(\frac{4q+q^2}{2}-\frac{1}{4\pi}\int_0^{2\pi}(q^2-4q^{\frac{3}{2}}\cos\theta+4q\cos^2\theta) \md \theta\right)\\
&=&(\kappa-1) q
\end{eqnarray*}
where $\kappa=2$  if the  variables are real, and $\kappa=1$  if the  variables are complex.

 The second part of the  limiting mean $\mu(g)$  is obtained by  (\ref{04mean2}) 
\[ 
\mu_2(g)=- \frac{\beta q }{2 \pi i} \oint (1-z)^2\frac{\underline{m}^3(z)}{(1+\underline{m}(z))[(1-q)\underline{m}^2(z)+2\underline{m}(z)+1]} \md z,
\]
For $z \in \mathbb{C}^{+}$, recall  the equation (9.12.12) given in \cite{BS04}
\[
 z=-\frac{1}{\underline{m}(z)}+\frac{q}{1+\underline{m}(z)}.
\]
Denote $\underline{m}(z)$  as $ m$  for simplicity, it is easily obtained that 
\[(1-z)^2=\frac{[m^2-(q-2)m+1]^2}{m^2(1+m)^2}  \quad \quad \md z = \frac{(1-q)m^2+2m+1}{m^2(1+m)^2} \md m\]
 
then we have 
\[ 
\mu_2(g)= -\frac{\beta q }{2 \pi i} \oint \frac{[m^2-(q-2)m+1]^2}{m(1+m)^5} \md m,
\]
By solving for $m$ from (9.12.12)    in \cite{BS10},  we get the contour for the integral above should enclose the interval 
\[
\left[\min(-\frac{1}{1-\sqrt{q}},  -\frac{1}{1+\sqrt{q}}),  \max(-\frac{1}{1-\sqrt{q}},  -\frac{1}{1+\sqrt{q}})\right ].
\]
Therefore,  -1 is the residue if $q\leq 1$ and 0 is the residue if $ q>1$.  and the integral is calculated as 
\[\mu_2(g)=\beta q, \]
 which is the same result for both the cases of $q \leq 1$
 and  $q>1$.
 
 Finally, we  obtained 
 \[\mu(g)=(\kappa-1)q+\beta q .\]  
 
\item {\bf Calculation of  $\upsilon (g)$ in (\ref{varRST}).}

By Lemma \ref{CLT},  the first part of  limiting variance $\upsilon (g)$ in (\ref{04var1}) is 
$$\upsilon_1(g)=-\frac{\kappa}{4\pi^2}\oint\oint\frac{g(z_1)g(z_2)}{(\underline{m}(z_1)-\underline{m}(z_2))^2}
\md\underline{m}(z_1)\md\underline{m}(z_2)$$ and
\begin{eqnarray*}
g(z_1)g(z_2)&=&(z_1-1)^2(z_2-1)^2\\
&=&1-2z_1-2z_2+z_1^2+z_2^2+4z_1z_2-2z^2_1z_2-2z_1z^2_2+z^2_1z^2_2.
\end{eqnarray*} 
Let \textbf{1} denote constant function which equals to 1,  It is obvious that  $\upsilon(\textbf{1},\textbf{1})=0$.
Denoting $\underline{m}(z_i)$ = $m_i,~~i=1,2.$
As mentioned above,
 for fixed $m_2$,~we
have  a contour enclosed   
 -1, but not 0 when $0\leq q \leq 1$, whereas it enclosed  0, but not  
  -1 when $ q>1$. 

On one hand,  we consider the case of $0 \leq q \leq1$.  Because
 \begin{eqnarray*}
   &&\displaystyle {\int \frac{z_1}
     {(m_1-m_2)^2}\md m_1}\\
   &&=\displaystyle {q\int (\frac{1}{1+m_{1}}+\frac{1-q}{q}) 
     \sum\limits^\infty_{j=0}(1+m_1)^j (m_2+1)^{-2}
     \sum\limits^\infty_{\ell=1}\ell(\frac{m_{1}+1}{m_{2}+1})^{\ell-1}\md m_1}\\
   &&=\displaystyle{2\pi i\cdot  \frac{q}{(1+m_2)^{2}}}.
 \end{eqnarray*}
  and 
  \begin{eqnarray}
  &&\displaystyle {\oint \frac{z_2^2}
     {(m_1-m_2)^2}\md m_1}\nonumber\\
     &&=\oint2\left(-\frac{1}{m_2}+\frac{q}{1+m_2}\right)\left(\frac{1}{m^2_2}-\frac{q}{(1+m_2)^2}\right) \frac{1}{m_2-m_1}\md m_2\nonumber\\
     && = 4\pi i \left[\frac{q}{(1+m_1)^2}+\frac{q^2}{(1+m_1)^3} \right]. \label{eqvarz2}
\end{eqnarray}
So $\upsilon(z_1^2-2z_1, ~\textbf{1})=0$.
Similarly,
$\upsilon(\textbf{1}, ~ z_2^2-2z_2)=0$.\\
Therefore, there are only four parts left, i.e. $z^2_1z^2_2-2z^2_1z_2-2z_1z^2_2+4z_1z_2$.   
Further,
\begin{eqnarray*}
&&\upsilon(z_1,z_2)\\
&&=\displaystyle{\frac {\kappa q^2}{2\pi
i}\int\frac{1}{(m_2+1)^{2}}(\displaystyle\frac{1}{1+m_{2}}+\displaystyle\frac{1-q}{q})
\displaystyle\sum\limits^\infty_{j=0}(1+m_2)^j\md m_2}\\
&&=\kappa q\\[5mm]
&&\upsilon(z^2_1,z_2)\\
&&=-\frac{\kappa}{4\pi^2}\oint\oint\frac{z^2_1z_2}{(m_1-m_2)^2}
\md m_1 \md m_2\\
&&=\displaystyle{\frac {\kappa q}{2\pi
i}}\oint\frac{\left(-\frac{1}{m_1}
       +\frac{q}{1+m_1}\right)^2}{(1+m_1)^2} \md  m_1\\
       &&=\frac {\kappa q}{2\pi
i}\oint2\left(-\frac{1}{m_1}+\frac{q}{1+m_1}\right)\left(\frac{1}{m^2_1}-\frac{q}{(1+m_1)^2}\right) \frac{1}{1+m_1}\md m_1\\
&&= 2\kappa q(1+q).
\end{eqnarray*}
Similarly, $
\upsilon(z_1,z^2_2)= 2\kappa q(1+q)$.
For the last part $\upsilon(z^2_1,z^2_2)$, the integral is calculated  by eq.(\ref{eqvarz2}) as below.
\begin{eqnarray*}
&&\upsilon(z^2_1,z^2_2)=-\frac{\kappa}{4\pi^2}\oint\oint\frac{z^2_1z^2_2}{(m_1-m_2)^2}
\md m_1 \md m_2\\
&&=\displaystyle{\frac {\kappa q}{\pi
i}}\oint\left(-\frac{1}{m_1}
       +\frac{q}{1+m_1}\right)^2\left[\frac{1}{(1+m_1)^2} +\frac{q}{(1+m_1)^3}\right] \md  m_1\\
&&=\frac { \kappa q}{\pi
i}\oint2\left(-\frac{1}{m_1}+\frac{q}{1+m_1}\right)\left(\frac{1}{m^2_1}-\frac{q}{(1+m_1)^2}\right) \frac{1}{1+m_1}\md m_1\\
&&+\frac {\kappa q^2}{\pi
i}\oint2\left(-\frac{1}{m_1}+\frac{q}{1+m_1}\right)\left(\frac{1}{m^2_1}-\frac{q}{(1+m_1)^2}\right)\left[ \frac{1}{2(1+m_1)^2}\right]\md m_1\\
&&=\kappa (4q+10q^2+4q^3)
\end{eqnarray*}

 Finally,
we obtain 
\begin{eqnarray*}
\upsilon_1(g)&=&\upsilon(z^2_1, z^2_2)-2\upsilon(z^2_1, z_2)-2\upsilon(z_1, z^2_2)+4\upsilon(z_1, z_2)\\
&=&\kappa (4q+10q^2+4q^3)
- 8\kappa q(1+q)+4\kappa q\\
&=&2\kappa q^2(1+2q).
\end{eqnarray*}
when $0 \leq q\leq 1$.

On the other hand,  Similar calculations  are conducted  for  the case of $q>1$. It is found that the result is the same as the one  of the case above  only with the residues are changed from  -1 to 0. So for all the cases of $q$, we arrive at 
\[\upsilon_1(g)=2\kappa q^2(1+2q)\]

For the second  part of    $\upsilon (g)$ in (\ref{04var2}), we have 
\[
\upsilon_2(g)=-\frac{\beta q}{4\pi^2}\oint\oint\frac{g(z_1)g(z_2)}{(1+m_1)^2(1+m_2)^2}
\md m_1\md m_2.
\]
Furthermore,
\begin{eqnarray*}
\oint\frac{g(z_1)}{(1+m_1)^2}\md m_1
=\oint \frac{[m_1^2-(q-2)m_1+1]^2}{m_1^2(1+m_1)^4} \md m_1
=-4\pi i q
\end{eqnarray*}
Since the  contour contains 
 -1 as a residue if  $0\leq q \leq 1$, and   enclose 0 as a residue  for the other case $ q>1$. 
By the  calculations of the both cases, it will be found that the results are all the same although the residues are different.
Thus we get
\[\upsilon_2(g)=-\frac{\beta q}{4\pi^2} \cdot (-4\pi i q) \cdot(-4\pi i q) =4\beta q^3.\]

 Finally, we  obtained 
 \[\upsilon(g)=2\kappa q^2(1+2q)+4\beta q^3.\]  
\end{itemize}

\section*{Acknowledgement}
The author thanks the reviewers for their helpful comments and suggestions to make an improvement of this article. This research was supported by the National Natural Science Foundation of China 11471140.

\section*{References}
\bibliography{Corrections to Rao's Score tests by DandanJiang.bbl}

\end{document}